\DeclareMathOperator{\E}{\mathbb{E}}
\DeclareMathOperator{\OPT}{OPT}
\newenvironment{customthm}[1]
  {\innercustomthm}
  {\endinnercustomthm}
\renewcommand{\ge}{\geqslant}
\renewcommand{\le}{\leqslant}
\renewcommand{\geq}{\geqslant}
\renewcommand{\leq}{\leqslant}
\newcommand{\eps}{\varepsilon}
\DeclareMathOperator{\poly}{poly}
\DeclareMathOperator{\polylog}{polylog}
\newcommand{\Eopt}{E_{\text{\textsc{opt}}}}
\newcommand{\Enoise}{E_{\text{\textsc{noise}}}}
\newcommand{\Egood}{E_{\text{\textsc{good}}}}
\newcommand{\argmax}{\mathrm{argmax}}
\title{Robust Algorithms under Adversarial Injections\thanks{Research supported in part by the Swiss National Science Foundation project 200021-184656 ``Randomness in Problem Instances and Randomized Algorithms.''}} 
\titlerunning{Robust Algorithms under Adversarial Injections} 
\author{Paritosh Garg}{EPFL, Switzerland}{paritosh.garg@epfl.ch}{}{}
\author{Sagar Kale}{University of Vienna, Austria}{sagar.kale@univie.ac.at}{}{}
\author{Lars Rohwedder}{EPFL,  Switzerland}{lars.rohwedder@epfl.ch}{}{}
\author{Ola Svensson}{EPFL,  Switzerland}{ola.svensson@epfl.ch}{}{}
\authorrunning{P. Garg, S. Kale, L. Rohwedder, and O. Svensson} 
\keywords{Streaming algorithm, adversary, submodular maximization, matching} 
\begin{document}


\nolinenumbers

\date{\today}

\maketitle

\begin{abstract}
%
In this paper, we study streaming and online algorithms in the context of randomness in the input.
For several problems, a random order
of the input sequence---as opposed to the worst-case
order---appears to be a necessary evil
in order to prove satisfying guarantees.
However, algorithmic techniques that work under this
assumption tend to be vulnerable to even small 
changes in the distribution.
For this reason, we propose a new \emph{adversarial injections} model, in which the input
is ordered randomly, but an adversary may inject misleading elements at arbitrary positions.
We believe that studying algorithms under 
this much weaker assumption can lead to new insights and, in particular, more robust algorithms.
We investigate two classical combinatorial-optimization problems in this model:
Maximum matching and cardinality constrained monotone submodular function maximization.
Our main technical contribution is a novel streaming algorithm for the latter that computes
a $0.55$-approximation. While the algorithm
itself is clean and simple, an involved analysis
shows that it emulates a subdivision of
the input stream which can be used to greatly limit the power of the adversary.
\end{abstract}
\newpage

\section{Introduction}
\label{sec:introduction}
In the streaming model, an algorithm reads the input sequentially from the input stream while using limited memory. In particular, the algorithm is expected to use memory that is much smaller than the input size, ideally, linear in the size of a solution.  We consider the most fundamental setting in which the algorithm is further restricted to only   read the input stream once. 
In this case, the algorithm cannot remember much of the input along the way, and part of the input
is irrevocably lost.
Something similar happens for online algorithms:
Here, the input is given to the algorithm one element at a time and the algorithm
has to decide whether to take it into its solution or to discard it. This decision is 
irrevocable.

The most common approach to analyze the quality of an algorithm in these models is worst-case analysis. Here, an adversary has full knowledge of the algorithm's strategy
and presents a carefully crafted instance to it, trying to make the ratio between the value of the
algorithm's solution and that of an optimum solution (the approximation ratio; for online algorithms called
the competitive ratio) as small as possible\footnote{We assume that the problem is a maximization problem.}. While worst-case analysis gives very robust guarantees, it is also well-known that such an analysis is often very pessimistic.
Not only are good guarantees not possible
for many problems, but in many cases 
worst-case instances appear quite artificial. Hence, the worst-case approximation/competitive ratio does not necessarily represent the
quantity that we want to optimize.

One way to remedy this is to weaken the power of the adversary and a popular
model to achieve that is the random-order model.
Here, an adversary may pick the instance as before, but it is presented in a uniformly-random order to the algorithm.
This often allows for
significantly better provable guarantees.
A prime example is the secretary problem: For the worst-case order it is impossible to get a bounded competitive ratio whereas for the random-order a very simple stopping rule achieves a competitive ratio of $1/e$.
Unfortunately, in this model, algorithms tend to overfit and the assumption of
a uniformly-random permutation of the input is
a strong one.  
%
To illustrate this point, it is instructive to consider two examples of
techniques that break apart when the random-order assumption
is slightly weakened:
  
Several algorithms in the random-order model first read a small fraction of the input, say, the first $1\%$ of the input.
Such an algorithm relies on the assumption
that around $1\%$ of the elements from
an optimum solution are contained in this first chunk.
It computes some statistics, summaries, or initial solutions using this chunk in order to
estimate certain properties of the optimum solution.
Then in the remaining $99\%$ of the input
it uses this knowledge to build a good solution for the problem.
For examples of such streaming algorithms, see Norouzi-Fard et al.~\cite{norouzi2018beyond} who study submodular maximization and Gamlath et al.~\cite{gamlath2019weighted} who study maximum matching.  Also Guruganesh and Singla's~\cite{guruganesh2017online} online algorithm for maximum matching for bipartite graphs is of this kind. Note that these algorithms are very sensitive to noise at the beginning of the stream. 

Another common technique is to split the input into fixed parts
and exploit that with high probability the elements of the optimum solution are
distributed evenly among these parts,
e.g., each part has at most one optimum element.
These methods critically rely on the assumption that each part is representative for the
whole input or that the parts are in some way homogeneous
(properties of the parts are the same in expectation).  Examples of such algorithms include  the streaming algorithm for maximum matching~\cite{konrad2012maximum}, and the streaming algorithm for submodular maximization~\cite{agrawal2018submodular} that achieves the tight competitive ratio $1-1/e$ in the random-order model.

The motivation of this work is to understand whether the strong assumption  of uniformly-random order is necessary to
allow for better algorithms. More specifically, we are motivated by the following question:
\begin{center}
\begin{minipage}{0.9\textwidth}
\begin{mdframed}[hidealllines=true, backgroundcolor=gray!15]
Can we achieve the same guarantees as in the uniform-random order but by algorithms that are more robust against some distortions in the input?
\end{mdframed}
\end{minipage}
\end{center}
In the next subsection, we describe our proposed model that is defined so as to avoid overfitting to the random-order model, and, by working in this model, our algorithms for submodular maximization and maximum matching are more robust while maintaining good guarantees.

\subsection{The Adversarial Injections Model}\label{sec:adv_inj_model}
Our model---that we call the \emph{adversarial-injections} model---lies in between the two extremes of
random-order and adversarial-order.
In this model,
the input elements are divided into two sets
$\Enoise$ and $\Egood$.
An adversary first picks all elements, puts each element in either $\Enoise$ or $\Egood$, and chooses the input order.
Then the elements belonging to $\Egood$ are permuted
uniformly at random among themselves.
The algorithm does not know if an element is good or noise. We judge the quality of the solution produced by an algorithm by comparing it to the best solution in $\Egood$.

An equivalent description of the model is as follows. First, a set of elements is picked
by the adversary and is permuted randomly.
Then, the adversary injects more elements at positions of his choice
without knowing the random permutation of the original stream\footnote{We remark that the assumption that the adversary does not know the order of the elements is important. Otherwise, the model is equivalent to the adversarial order model for ``symmetric'' problems such as the matching problem. To see this, let  $\Eopt$ correspond to an optimum matching in any hard instance under the adversarial order.
Since a matching is symmetric, the adversary can inject appropriately renamed edges depending on the order of the edges (which he without this assumption knows) and obtain exactly the hard instance.
}.
Comparing with the previous definition, the elements injected by the adversary correspond to $\Enoise$ 
and the elements of the original stream correspond to $\Egood$.

We denote by $\Eopt \subseteq \Egood$ the elements of a fixed optimum solution of the elements in $\Egood$.  We can assume without loss of generality that $\Egood = \Eopt$, because otherwise elements in $\Egood\setminus\Eopt$ can be treated as those belonging to $\Enoise$ (which only strengthens the power of the adversary).

\subsection{Related Models}\label{sec:rel_models}
 With a similar motivation, Kesselheim, Kleinberg, Niazadeh~\cite{KesselheimKN15} studied the robustness of algorithms for the secretary problem from a slightly different perspective: They considered the case when the order of the elements is not guaranteed to be uniformly-at-random but still contains ``enough'' randomness with respect to different notions such as entropy. Recently, Esfandiari, Korula, Mirrokni~\cite{esfandiari2015online} introduced a model where the input is a combination of stochastic input that is picked from a distribution and adversarially ordered input. Our model is different in the sense that the input is a combination of randomly ordered (instead of stochastic input) and adversarially ordered elements.   

 Two models that are more similar to ours in the sense that the input is initially ordered in a uniformly-random order and then scrambled by an adversary in a limited way are~\cite{guha_mcg_06} and~\cite{bradac2019robust}. 
First, in the streaming model,  
 Guha and McGregor~\cite{guha_mcg_06} introduced the notion of a \mbox{$t$-bounded} adversary that can disturb a uniformly-random stream but has memory to remember and delay at most $t$ input elements at a time.  Second,   Bradac et al.~\cite{bradac2019robust} very recently introduced a new model  that they used to obtain robust online algorithms
for the secretary problem.  Their model, called the \emph{Byzantine} model, is very related to ours: the input is split into two sets which exactly correspond to $\Egood$ and $\Enoise$ in the adversarial-injections model.
The adversary gets to pick the elements in both of them, but
an algorithm will be compared against only $\Egood$.
Then---this is where our models differ---the adversary chooses an arrival time in $[0,1]$ for each element in $\Enoise$.
He has no control over the arrival times of the elements in $\Egood$, which
are chosen independently and uniformly at random in $[0,1]$.
The algorithm does not know to which set an element belongs, but it
knows the timestamp of each element, as the element arrives.
While the Byzantine model prevents certain kinds of overfitting (e.g., of the classical algorithm for the secretary problem),
it does not tackle the issues of the two algorithmic techniques we discussed earlier:
Indeed, by time $t = 0.01$, we will see around $1\%$ of the elements
from $\Eopt$. Hence, we can still compute some estimates based on them, but do not lose a lot
when dismissing them. Likewise, we may partition the timeline, and thereby the input, into parts such that in each part at most one element of $\Eopt$ appears.

Hence, even if our model appears very similar to the Byzantine model,
there is this subtle, yet crucial, difference.
The adversarial-injections model does not
add the additional dimension of time, and hence, does not allow for the kind of overfitting
that we discussed earlier. To further emphasize this difference, we now describe why it is strictly harder to devise algorithms in  the adversarial-injections model compared to  the Byzantine model.  It is at least as hard as the Byzantine model, because any algorithm for the former also works for the latter. This holds because 
the adversarial-injections model can be thought of as the Byzantine model with additional power to the adversary and reduced power for the algorithm:  The adversary gets the additional power of setting the timestamps of elements in $\Egood$, but not their identities, whereas the algorithm is not allowed to see the timestamp of any element.

To show that it is strictly harder, consider online bipartite matching.  We show that one cannot beat $1/2$  in the adversarial-injections model (for further details, see Section~\ref{sec:mat-online}) whereas we observe that the $(1/2+\delta)$-approximation algorithm~\cite{guruganesh2017online} for bipartite graphs and its analysis generalizes to the Byzantine model as well. 
This turns out to be the case because the algorithm in~\cite{guruganesh2017online} runs a greedy algorithm on the first small fraction, say $1\%$ of the input and ``augments'' this solution using the remaining $99\%$ of the input. The analysis crucially uses the fact that $99\%$ of the optimum elements are yet to arrive in the augmentation phase.  This can be simulated in the Byzantine model using timestamps in the online setting as one sees $1\%$ of $\Eopt$ in expectation.

\subsection{Our Results}
\label{sec:results}
We consider two benchmark problems in combinatorial optimization under the adversarial-injections model in 
both the streaming and the online settings, namely
maximum matching and monotone submodular maximization subject to a cardinality constraint.
As we explain next, the study of these classic problems in our new model gives interesting insights: for many settings we can achieve more robust algorithms with similar guarantees as in the random-order model but, perhaps surprisingly, there are also natural settings where the adversarial-injection model turns out to be as hard as the adversarial order model. 

\textbf{The maximum matching problem.} We first discuss the (unweighted) \emph{maximum matching} problem.
Given a graph $G = (V,E)$, a matching $M$ is a subset of edges such that every vertex has at most one incident edge in $M$. A matching of maximum cardinality is called a maximum matching,
whereas a \emph{maximal} matching is one in which no edge can be added
without breaking the property of it being a matching.  The goal in the maximum matching problem is to compute a matching of maximum cardinality.  Note that a maximal matching is $1/2$-approximate.
Work on maximum matching has led to several important concepts and new techniques in theoretical computer science~\cite{mulmuley1987matching, lovasz1979determinants, edmonds1965paths, karp1986constructing}.
The combination of streaming and random-order model was first studied by Konrad, Magniez and Mathieu~\cite{konrad2012maximum}, where edges of the input graph arrive in the stream.
We allow a streaming algorithm to have memory $O(n \polylog(n))$,
which is called the semi-streaming setting. This is usually significantly less than
the input size, which can be as large as $O(n^2)$.
This memory usage is justified, because even storing a solution can
take $\Omega(n \log(n))$ space ($\Omega(\log(n))$ for each edge identity).
The question that Konrad et al.\ answered affirmatively was whether the
trivial $1/2$-approximation algorithm that computes a maximal matching can be improved in the random-order model.
Since then, there has been some work on improving the constant~\cite{gamlath2019online, farhadi2020approximate}.
The state-of-the-art is an approximation ratio of $6/11 \approx 0.545$ proved by
Farhadi, Hajiaghayi, Mah, Rao, and Rossi~\cite{farhadi2020approximate}.
We show that beating the ratio of $1/2$ is possible also in the adversarial-injections model by building on the techniques developed for the random-order model. 
\begin{theorem} \label{thm:matching}
There exists an absolute constant $\gamma>0$ such that there is a semi-streaming algorithm for
maximum matching under adversarial-injections with an approximation ratio of $1/2 + \gamma$
in expectation.  
\end{theorem}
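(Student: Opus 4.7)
The plan is to adapt the three-phase semi-streaming algorithm of Konrad, Magniez, and Mathieu to the adversarial-injections setting. In the pure random-order model their algorithm greedily builds a maximal matching $M_1$ on the first third of the stream, records on the second third a ``free-to-matched'' candidate edge incident to each vertex matched by $M_1$, and closes these candidates in the final third into $3$-augmenting paths whenever another free-to-matched edge appears on the other side of an $M_1$-edge. Each third sees $|\Eopt|/3$ optimum edges in expectation, which is exactly what drives the improvement above $1/2$. My task is to salvage enough of this structure when the adversary may inject arbitrary noise edges at arbitrary positions.

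The key enabling observation is that the adversary must commit to the positions of $\Enoise$ before the random permutation of $\Egood$ is drawn, so the \emph{relative} order of the edges of $\Eopt$ in the stream is still uniformly random. Hence, if the algorithm could phase-split the stream at the ``right'' positions, each phase would, with good probability, still contain a constant fraction of $\Eopt$. The first technical step is to make this precise by concentration on the ranks of the $\Eopt$-elements inside the subsequence $\Egood$, and to replace ``position in the stream'' by these ranks in the KMM analysis.

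Since the algorithm does not know where the adversary places the noise and thus cannot choose the right phase boundaries from the start, my approach is to run $O(\log n)$ copies of the phase-based algorithm in parallel, with boundaries at exponentially spaced positions (e.g., after $2^i$ stream elements), and to output the best of the matchings produced; the semi-streaming memory budget $O(n \polylog n)$ comfortably accommodates this. A simple counting argument ensures that at least one guess lies within a constant factor of the actual stream length, so that each of its three phases carries a constant fraction of $\Eopt$ with constant probability; any copy that fails can only be discarded, never harm the output.

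The main obstacle I anticipate is analyzing the greedy matching $M_1$ on what is now a mixture of a random prefix of $\Egood$ and adversarially placed noise, rather than on a pure random prefix of an unadulterated stream. Noise edges included in $M_1$ can occupy vertices that would otherwise be free to participate in $3$-augmenting paths, thereby blocking augmentations that the KMM analysis would count. The technical heart of the proof is therefore a charging argument: every noise edge in $M_1$ can block only $O(1)$ augmenting paths, and when many such blockings occur, $M_1$ itself is already large enough that the baseline $|M|\ge |\Eopt|/2$ absorbs the loss. Combined with the random order of $\Eopt$, which forces enough augmenting paths to appear in the ``correct'' temporal shape (candidate edge before the closing edge, both after $M_1$), this should deliver the absolute gain $\gamma>0$ over the $1/2$-approximation guaranteed by any maximal matching.
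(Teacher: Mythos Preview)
Your plan has a genuine gap at the phase-splitting step. Under adversarial injections the adversary fixes the \emph{stream positions} of the $\Eopt$-slots before the random permutation is drawn; only the assignment of $\Eopt$-edges to those slots is random. Hence the number of $\Eopt$-edges before any fixed stream position $p$ is a deterministic quantity chosen by the adversary, not a random variable on which you can do concentration. In particular, the adversary can place all $k=|\Eopt|$ slots inside a window of width $k$ somewhere in a stream of length $N\gg k$; then no boundary of the form $2^i$ lands inside that window, and every one of your $O(\log n)$ copies sees either all of $\Eopt$ in a single phase or none of it in the first phase. Running parallel copies at geometrically spaced positions therefore does not recover ``a constant fraction of $\Eopt$ in each phase,'' and the KMM analysis never gets off the ground. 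The paper avoids this entirely by abandoning position-based splitting: the second algorithm ends Phase~1 the moment the greedy matching reaches size $(1/2-\eps)|M^*|$, an \emph{adaptive} criterion that the adversary cannot manipulate by padding the stream with noise.

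There is a second missing idea. Even once you have a meaningful split, the event ``the greedy matching at the end of Phase~1 is small/large'' is correlated with the random order of $\Eopt$, so you cannot simply condition on it and still treat the remaining $\Eopt$-edges as uniformly ordered; your proposed charging argument does not touch this. The paper's key technical device here is a robustness lemma for greedy: deleting a single edge from the stream changes the greedy matching size by at most one. This lets them replace the random event by a deterministic dichotomy (``for all permutations Phase~1 finishes inside $\sigma_1$'' versus ``for some permutation it does not''), so that in each branch the order of $\Eopt$ is still uniform and the subsequent augmentation or greedy-picks-$\Eopt$-edges argument goes through.
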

We note that beating $1/2$ in adversarial-order streams is a major open problem. In this regard, our algorithm can be viewed as a natural first step towards understanding this question.

Now we move our attention to the online setting, where the maximum matching problem was first studied in the seminal work of Karp, Vazirani, and Vazirani~\cite{karp1990optimal}. They gave a tight $(1 - 1/e)$-competitive algorithm for the so-called one-sided vertex arrival model which is an important special case of the edge-arrival model considered here. Since then, the online matching  problem has received significant attention (see e.g.~\cite{buchbinder2019online, epstein2012improved, feige2018tighter, huang2019tight, gamlath2019online}).    Unlike the adversarial streaming setting, there is a recent hardness result due to \cite{gamlath2019online} in the adversarial online setting that the trivial ratio of $1/2$ cannot be improved. We also know by \cite{guruganesh2017online} that one can beat $1/2$ for bipartite graphs in the random-order online setting. Hence, one might hope at least for bipartite graphs to use existing techniques to beat $1/2$ in the online adversarial-injections setting and get a result analogous to Theorem~\ref{thm:matching}.  But surprisingly so, this is not the case. 
We observe that the construction used in proving Theorem~3 in \cite{gamlath2019online} also implies
that there does not exist an algorithm with a competitive ratio of $1/2 + \eps$ for any $\eps > 0$
in the adversarial-injections model.\\[-1mm]

\textbf{Maximizing a monotone submodular function subject to a cardinality constraint.} 
In this problem, we are given a ground set $E$ of $n$ elements and a monotone submodular
set function $f : 2^E \rightarrow \mathbb R_{\ge 0}$. A function is said to be submodular, if for
any $S, T\subseteq E$ it holds that $f(S) + f(T) \ge f(S\cup T) + f(S\cap T)$. It is monotone
if $f(S) \le f(T)$ for all $S\subseteq T \subseteq E$.
The problem we consider is to find a set $S\subseteq E$ with $|S|\le k$ that maximizes $f(S)$.
We assume that access to $f$ is via an oracle.

In the offline setting, a simple greedy algorithm that iteratively picks the element with the largest marginal contribution to $f$ with respect to the current solution is
$(1 - 1/e)$-approximate~\cite{nemhauser1978analysis}.  
This is tight: Any algorithm that achieves an approximation ratio of better than $(1 - 1/e)$ must make $\Omega(n^k)$ oracle calls~\cite{nemhauser1978best}, which is enough to brute-force over all $k$-size subsets. Even for  maximum coverage (which is a special family of monotone submodular functions), it is NP-hard to get an approximation algorithm with ratio better than $1-1/e$~\cite{feige}.

In the random-order online setting, this problem is called the \emph{submodular secretary} problem, and an exponential time $1/e$-approximation and polynomial-time $(1 - 1/e)/e$-approximation algorithms are the state-of-the-art~\cite{kesselheim_et_al17}.  In the adversarial online setting, it is impossible to get any bounded approximation ratio for even the very special case of picking a maximum weight element.  In this case,  $|\Eopt| = 1$ and adversarial and adversarial-injections models coincide; hence the same hardness holds.  In light of this negative result, we focus on adversarial-injections in the streaming setting.
Note that to store a solution we only need
the space for $k$ element identities.
We think of $k$ to be much smaller than $n$.
Hence, it is natural to ask,
whether the number of elements in memory can
be independent of $n$.

For streaming algorithms in the adversarial order
setting, the problem was first studied by Chakrabarti and Kale~\cite{chakrabarti2015submodular} where they gave a $1/4$-approximation algorithm.
This was subsequently improved to $1/2 - \eps$  by Badanidiyuru et al.~\cite{badanidiyuru2014streaming}.
Later, Norouzi-Fard et al.~\cite{norouzi2018beyond} observed that in the random-order model this ratio can be
improved to beyond $1/2$.
Finally, Agrawal et al.~\cite{agrawal2018submodular} obtained a tight $(1-1/e)$-approximation guarantee in the random-order model.

The algorithm of Agrawal et al.~\cite{agrawal2018submodular} involves as a crucial step a partitioning the stream in order to isolate the elements of the optimum solution. As discussed earlier, this approach does not work under adversarial-injections. 
However, we note that the algorithm and analysis by Norouzi-Fard et al.~\cite{norouzi2018beyond} can be easily modified to work under adversarial-injections as well.
Their algorithm, however, has an approximation ratio of $1/2 + 8 \cdot 10^{-14}$. In this paper, we remedy this weak guarantee. 

\begin{theorem} \label{thm:submod}
There exists a $0.55$-approximation algorithm that stores a number of elements that is independent of $n$ for maximizing a monotone submodular function with a cardinality constraint $k$ under adversarial-injections in the streaming setting.
\end{theorem}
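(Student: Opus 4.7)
My plan would be to analyze a threshold-based streaming algorithm: upon reading each element $e$, add it to the current solution $S$ if and only if its marginal gain $f(S\cup\{e\}) - f(S)$ is at least $\tau$, halting once $|S|=k$. To handle the unknown value of $f(\Eopt)$, I would run in parallel $O(\log(k)/\eps)$ copies with geometrically spaced guesses for $\tau$, using the maximum singleton value seen so far to bound the range. This keeps memory at $O(k\log(k)/\eps)$ element identities, which is independent of $n$ as required.

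The textbook analysis yields only $1/2$: if the algorithm terminates with $|S|=k$ then $f(S)\ge k\tau$; otherwise every element of $\Eopt$ was rejected, so its marginal gain at arrival was below $\tau$, which by submodularity gives $f(\Eopt)\le f(S) + k\tau$. With $\tau = f(\Eopt)/(2k)$ both cases deliver $f(S)\ge f(\Eopt)/2$. To push past $1/2$ I would exploit the random ordering of $\Egood$: any accepted noise element also contributes at least $\tau$, and, crucially, because the adversary commits to injected elements without knowing the random permutation, the positions at which $\Eopt$ elements arrive are essentially independent of the noise placement. I would formalise this by a conceptual subdivision of the stream into $k$ ``slots'' defined by the ordered arrival times of the optimum elements; conditioning on such a subdivision lets me treat each $\Eopt$ element as a random arrival within its surrounding window of injected elements, which is exactly the feature the abstract promises.

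With $\tau = c \cdot f(\Eopt)/k$ for some $c<1/2$, the ``full bucket'' case already yields $c\cdot f(\Eopt)$. For the remaining case $|S|<k$, I would decompose $f(S)$ into two disjoint sources of value: the accepted noise elements, each of which contributed at least $\tau$ when picked, and the accepted optimum elements, whose aggregate contribution I bound using the random-slot argument. I would then combine this with the refined rejection inequality $f(\Eopt)\le f(S) + (|\Eopt|-|S\cap\Eopt|)\tau$, applied only to the rejected optimum elements. Averaging over the random permutation and choosing $c$ carefully should yield that in expectation $\max\{k\tau,\ f(S)\ \text{from the decomposition}\}\ge 0.55 \cdot f(\Eopt)$; the gain over $1/2$ comes from the fact that each slot either deposits an $\Eopt$ element into $S$ (boosting source two) or sees the algorithm accept a high-marginal injected element (boosting source one).

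The main obstacle I foresee is controlling adaptive noise inside a submodular landscape: the adversary can design injected elements whose marginal contribution mimics that of upcoming optimum elements, thereby inducing the algorithm either to waste budget on noise just before a valuable $\Eopt$ element arrives or to accept a devalued $\Eopt$ element whose marginal has already been absorbed by noise. Ruling this out rigorously requires exploiting that injections are non-adaptive to the random permutation of $\Egood$, most plausibly via a coupling or exchange argument that swaps a noise element with a not-yet-seen $\Eopt$ element and tracks the effect on the algorithm's trajectory. Making this exchange behave well under submodular interactions, and aggregating it across all $k$ slots without losing the extra $0.05$ margin, is the delicate part; I expect the subdivision viewpoint to be the right bookkeeping device to keep the analysis tractable.
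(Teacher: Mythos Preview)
Your proposal has a genuine gap: the algorithm you analyze is the single-threshold \textsc{Sieve-Streaming} scheme, and nothing in your sketch establishes that this algorithm can reach $0.55$ under adversarial injections. The two-case analysis you give ($|S|=k$ versus $|S|<k$) is exactly the standard $1/2$ argument, and the paragraph meant to push past $1/2$ contains no inequality that actually produces a constant above $1/2$. You write that the slot decomposition ``should yield'' $0.55$ and then immediately identify the obstacle (noise crafted to absorb the marginal of upcoming optimum elements) without resolving it. That obstacle is real: the adversary can place, before the first $\Eopt$-position, a batch of injected elements whose marginals sit just above your threshold and whose coverage overlaps the optimum, so the algorithm either fills up on value $\approx k\tau$ or, at a higher $\tau$, rejects everything and falls back to the $f(\Eopt)\le f(S)+k\tau$ bound. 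Balancing these two regimes over your parallel guesses does not obviously exceed $1/2$ by any fixed constant, let alone $0.05$; your exchange/coupling idea would have to control how a single committed trajectory reacts to such noise, and you give no mechanism for that.

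The paper takes a completely different route. The algorithm is \emph{not} threshold-based: it maintains a rooted tree of height $k$ in which every node represents a partial solution, and each arriving element is appended as a child wherever its marginal is a value not yet seen among that node's children. Memory is $(O(k))^{k}$ rather than $O(k\log k)$. The point of this extravagance is that the algorithm never commits; it keeps essentially all ``greedy-like'' trajectories simultaneously. The analysis then \emph{selects} one leaf using full knowledge of $\Eopt$ and the realization of $\pi$---a selection the algorithm need not compute---via two candidates $u^\pi_i$ (best increase strictly before $o^\pi_{n_i}$) and $v^\pi_i$ (best increase up to and including $o^\pi_{n_i}$), choosing between them according to whether $\E_\pi f(u^\pi_i\mid S^\pi_{i-1}) \ge t\cdot \E_\pi f(o^\pi_{n_i}\mid S^\pi_{i-1})$. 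This yields a recursion $R(k,h)$ which, with $t=0.8$, is shown to satisfy $R(k,k)\ge 0.5506$. The ``subdivision of the stream'' in the abstract refers to this analytic device, not to anything the algorithm does; your slot idea is in the same spirit, but a threshold algorithm keeps only one trajectory, so there is no leaf-selection freedom to exploit. That freedom is precisely what neutralizes the adversarial noise you worried about.
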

We summarize and compare our results with random-order and adversarial-order models for the problems we study in Table~\ref{tab:one}. It is interesting to see that in terms of beating $1/2$, our model in the streaming setting agrees with the random-order model and in the online setting agrees with the adversarial-order model. 

\begin{table}
\begin{center} 
\caption{\label{tab:one}: Comparison of different models for the two studied problems. Here, $\gamma > 0$ is a fixed absolute constant and $\eps>0$ is any constant. }
\begin{tabular}{|l|l|l|l|}
 \multicolumn{4}{l}{\textbf{Maximum matching}} \\
 \hline
  & Random order & Adversarial Injections & Adversarial order \\ 
 \hline
 Streaming & $\geq 6/11$ \cite{farhadi2020approximate} & $\geq 1/2 + \gamma$ & $\leq 1 - 1/e + \eps$ \cite{kapralov2013better} \\
 \hline
 Online & $\geq 1/2$ (folklore) & $\leq 1/2$ & $\leq 1/2$ \cite{gamlath2019online}  \\
 \hline
 \multicolumn{4}{l}{\textbf{Submodular function maximization}} \\
 \hline
  & Random order & Adversarial Injections & Adversarial order \\ 
 \hline
  Streaming  & $\geq 1 - 1/e - \eps$ \cite{agrawal2018submodular} & $\geq 0.55$ & $\geq 1/2 - \eps$ \cite{badanidiyuru2014streaming} \\
  &  $\leq 1 - 1/e + \eps$ \cite{mcgregor2019better} & &  $\leq 1/2$ \cite{feldman_etal20}\\
 \hline
\end{tabular}
\end{center}
\end{table}

\section{Matching}
\label{sec:matching}
In this section, we consider the problem of
maximum unweighted matching
under adversarial injections in both streaming and online settings where
the edges of the input graph arrive one after another.

\subsection{Streaming Setting}
We show that the trivial approximation ratio of $1/2$
can be improved upon.
We provide a robust version of existing techniques and prove a statement about robustness of the greedy algorithm to achieve this.  

First, let us introduce some notation which we will use throughout this section.
We denote the input graph by $G = (V, E)$, and let $M^*$ be a maximum matching.
For any matching $M$, the union $M\cup M^*$ is
a collection of vertex-disjoint paths and cycles.
When $M$ is clear from the context,
a path of length $i\ge 3$ in $M\cup M^*$
which starts and ends with
an edge of $M^*$ is called an $i$-augmenting path.
Notice that an $i$-augmenting path alternates between
edges of $M^*$ and $M$ and that we can increase the size of $M$ by one by taking all edges from $M^*$ and removing
all edges from $M$ along this path.
We say that an edge in $M$ is $3$-augmentable
if it belongs to some $3$-augmenting path.
Otherwise, we say it is non-$3$-augmentable.  Also, let $M^* = \Eopt$; as described in the introduction, this is without loss of generality.

As a subroutine for our algorithm we need
the following procedure.
\begin{lemma}[Lemma 3.1 in \cite{gamlath2019weighted}]\label{find-3-aug}
There exists a streaming algorithm \textsc{3-Aug-Paths} with the following properties: 
\begin{enumerate}
    \item The algorithm is initialized with a matching $M$ and a parameter $\beta > 0$.
    Then a set $E$ of edges is given to the algorithm one edge at a time. 
    \item If $M \cup E$ contains at least $\beta|M|$ vertex disjoint 3-augmenting paths, the algorithm returns a set $A$ of at least $(\beta^2/32)|M|$ vertex disjoint 3-augmenting paths. The algorithm uses space $O(|M|)$. 
\end{enumerate}
\end{lemma}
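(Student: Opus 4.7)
Proof proposal.

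The plan is a randomized one-pass streaming algorithm that orients $M$-edges and free vertices at random into a ``left/right'' dichotomy, runs two online greedy matching routines in parallel (one per side), and outputs the $M$-edges covered on both sides. Before reading the stream, assign each $e=(v,w)\in M$ an independent uniform random orientation, designating $L(e)$ and $R(e)$, and assign each vertex $u\notin V(M)$ an independent uniform random side $h(u)\in\{L,R\}$ (implemented via a $2$-wise independent hash, so the seed fits in $O(\log n)$ bits, within the $O(|M|)$ space budget). Maintain two $|M|$-indexed tables $F_L,F_R$, initialised to empty. When an edge $(u,v)$ arrives with $u\notin V(M)$: if $h(u)=L$, $v=L(e)$ for some $e\in M$, the slot $F_L[e]$ is empty, and $u$ is not already stored in $F_L$, set $F_L[e]\gets u$; perform the symmetric update for the right side. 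At the end, output $A := \{(F_L[e],L(e),R(e),F_R[e]) : F_L[e],F_R[e]\neq\emptyset\}$, which is a collection of vertex-disjoint $3$-augmenting paths (since $F_L,F_R$ are matchings on disjoint vertex pools and the $L,R$-endpoints of distinct $M$-edges are distinct), stored in $O(|M|)$ space.

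Call a promised $3$-augmenting path $p=u-v-w-x$ with $(v,w)\in M$ \emph{compatible} if, after possibly swapping $v\leftrightarrow w$, we have $L(e_p)=v$, $h(u)=L$, and $h(x)=R$. A direct calculation using independence of the random choices gives $\Pr[p\text{ compatible}]=1/4$, and the vertex-disjointness of the promised paths makes these events independent across paths; hence the set $P$ of compatible paths satisfies $\mathbb{E}[|P|]\ge \beta|M|/4$. For each compatible $p$, the edge $(u_p,L(e_p))$ lies in the ``left auxiliary graph'' $G_L$ between left-half free vertices and $L$-endpoints of $M$-edges, and the collection $\{(u_p,L(e_p)):p\in P\}$ is a matching in $G_L$ of size $|P|$. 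Since $F_L$ is precisely the output of online greedy on $G_L$, it is maximal in $G_L$ and hence $|F_L|\ge |P|/2$; symmetrically $|F_R|\ge |P|/2$.

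The hard step, and the one I would devote the most care to, is turning these per-side bounds into a bound on $|A|=|\{e:F_L[e]\text{ and }F_R[e]\text{ both defined}\}|$, since a priori the two sides could cover disjoint subsets of $M$. The key observation is that, once $h$ is fixed, the stream edges relevant to $F_L$ (those whose free endpoint has $h=L$) are disjoint from those relevant to $F_R$, so the two greedy processes run on disjoint portions of the stream and are conditionally independent given $h$. I would then argue by a charging argument that each of the two greedy matchings---of expected size $\Theta(\beta|M|)$---covers $M$-edges roughly uniformly among the compatible ones, so that each compatible $M$-edge is captured on each side with probability $\Omega(\beta)$ independently; combined with the $\beta/4$ compatibility rate and the $1/2$ greedy-maximality factors on each side, this gives $\mathbb{E}[|A|]\ge (\beta^2/32)|M|$. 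The delicate point is formalising the ``uniform coverage'' claim: I would handle it by a direct per-edge charging that controls how many compatible $M$-edges can fail to be covered on the left side in terms of how many left-half free vertices are consumed by non-compatible matches made by the greedy, and then symmetrically on the right, using conditional independence given $h$ to multiply the bounds together.
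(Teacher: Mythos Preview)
First, note that this paper does not prove Lemma~\ref{find-3-aug} at all: it is quoted verbatim from~\cite{gamlath2019weighted} and used as a black box. So there is no ``paper's own proof'' to compare against; the question is whether your argument stands on its own.

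Your proposal has a genuine gap exactly at the step you flag as hard, and in fact the \emph{algorithm} you describe is incorrect, not merely the analysis. Consider $M=\{e_1,e_2\}$ with $e_i=(a_i,b_i)$, free vertices $u,x$, the single $3$-augmenting path $u\!-\!a_1\!-\!b_1\!-\!x$ (so $\beta=1/2$), and two decoy edges $(u,a_2),(u,b_2)$. Stream order: $(u,a_2),(u,b_2),(u,a_1),(x,b_1)$. Whatever orientation $e_2$ receives, one of the first two edges goes to $L(e_2)$ and the other to $R(e_2)$. Hence if $h(u)=L$ then $F_L[e_2]\gets u$ before $(u,a_1)$ is seen, so $F_L[e_1]$ stays empty; if $h(u)=R$ then symmetrically $F_R[e_2]\gets u$ and $F_R[e_1]$ stays empty. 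Since $u$ is the only free neighbour of $a_1$ and $x$ the only free neighbour of $b_1$, one checks that in \emph{every} one of the $16$ outcomes of $(L(e_1),L(e_2),h(u),h(x))$ we get $A=\emptyset$. Scaling this construction to $k$ disjoint copies gives $|M|=2k$, $\beta=1/2$, and $\E|A|=0$, contradicting the claimed $\E|A|\ge(\beta^2/32)|M|=k/64$.

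The failure pinpoints what goes wrong in your sketch. The ``uniform coverage'' heuristic is false: greedy is deterministic given $(h,\text{orientation})$, so conditional on those bits there is no residual randomness over which to argue that a fixed compatible $M$-edge is captured ``with probability $\Omega(\beta)$''. Conditional independence of the two sides given $h$ is true but vacuous for this purpose, since the adversary can route each free vertex $u$ to a decoy $M$-edge on whichever side $h(u)$ lands, emptying $A_L\cap A_R$ entirely. The bounds $|F_L|,|F_R|\ge |P|/2$ only control sizes, not which $M$-edges are covered, and that is precisely what the intersection needs. The algorithms in the literature avoid this by a mechanism that \emph{couples} the two sides---e.g.\ restricting the second greedy to $M$-edges already hit by the first, storing a candidate per $M$-endpoint rather than running two independent pool-restricted greedies, or subsampling $M$ so that decoy endpoints are scarce---so that the $\beta^2$ factor comes from a genuine two-stage argument rather than from multiplying two uncorrelated $\Omega(\beta)$ events.
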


\subsubsection{The Algorithm} \label{alg:mat}
We now describe our algorithm \textsc{Match}. It runs two algorithms in parallel and selects the better of the two outputs.
The first algorithm simply constructs a maximal matching greedily by updating the variable $M_1$.
The second algorithm also constructs a matching $M^{(1)}_2$ greedily, but it stops once $M^{(1)}_2$ has $|M^*|(1/2-\eps)$ edges. We call this Phase~1.
Then, it finds 3-augmentations using
the \textsc{3-Aug-Paths} algorithm given by Lemma~\ref{find-3-aug}. 
Finally, it augments the paths found to obtain a matching $M_2$. The constant
$\beta$ used in \textsc{3-Aug-Paths}
is optimized for the
analysis and will be specified there.

Notice that here we assumed that the algorithm knows $|M^*|$. 
This assumption can be removed using geometric guessing at a loss of an arbitrary small factor in the approximation ratio. We refer the reader to Appendix~\ref{app:matchopt}  
for details. 

\subsubsection{Overview of the Analysis} \label{sec:mat-overview}
We discuss only the intuition here and refer the reader to Appendix~\ref{app:match} 
for a formal proof.
Consider the first portion of the stream until we have seen a
small constant fraction of the elements in $\Eopt$.
If the greedy matching up to this point is already close to a $1/2$-approximation, 
this is good for the second algorithm as we are able to augment the matching using the remaining edges of $M^*$.
The other case is good for the first algorithm:
We will show that the greedy matching formed so far must contain a significant fraction of the edges in $M^*$ which we have seen so far.
If this happens, the first algorithm
outputs a matching of size a constant fraction more than $|M^*|/2$.

A technical challenge and novelty comes from the fact that
the two events above are
not independent of the random order of $\Eopt$.
Hence, when conditioning on one event,
we can no longer assume that the order of $\Eopt$ is uniformly at random.
We get around this by showing that the greedy algorithm
is robust to small changes in streams.
The intuition is that in the first part of the stream
the greedy solution either is large for all permutations
of $\Eopt$ or it is small for all permutations.
Hence, these are not random events depending on the
order, but two cases in which we can assume a uniform distribution.

\subsection{Online Setting}\label{sec:mat-online}
Since we can improve $1/2$ for the streaming setting,
it is natural to hope that the existing techniques
(e.g., the approach of the previous subsection)
can be applied in the online setting as well.
Surprisingly, this is not the case. In other words,
the competitive ratio of $1/2$ is optimal even for bipartite graphs.
The technique from the previous subsection breaks apart,
because the algorithm constructs several
candidate solutions in parallel by guessing $|M^*|$. 
This is not a problem for a streaming algorithm, however,
an online algorithm can only build one solution.

For a formal proof, we rely on the bipartite construction used in the proof of Theorem~3 from \cite{gamlath2019online}. The authors show that there is no 
(randomized) algorithm with a competitive ratio of $1/2 + \eps$ for any $\eps > 0$. More precisely,
they show that not even a good fractional matching can be
constructed online. For fractional matchings, randomization
does not help and therefore we can assume the algorithm is
deterministic.
The original proof is with respect to adversarial order,
but it is not hard to see that it transfers to
adversarial injections.

The authors construct a  bipartite instance that arrives in (up to) $N$ rounds. In round $i$, a matching of size $i$ arrives.
The algorithm does not know whether the current round is the last one or not.
Hence, it has to maintain a good approximation after each round.
This forces the algorithm to take edges that do not belong to the optimal matching and
eventually leads to a competitive ratio of $1/2$.
The same construction works in our model:
The edges from the optimal matching arrive in the last round and their internal order does not
affect the proof. In fact, the construction works for any order of the elements within a round.
Thus, an algorithm cannot exploit the fact that their order is randomized and therefore
also cannot do better than $1/2$.


\section{Submodular Maximization}
\label{sec:submod}
In this section, we consider the problem of submodular maximization subject to a cardinality constraint.
The algorithm has query access to a monotone, submodular function
$f: 2^{E} \rightarrow \mathbb R$ over a ground set $E$.
Moreover, $f$ is normalized with $f(\emptyset) = 0$.
The goal is to compute a set $S$ of size at most $k$ that
maximizes $f(S)$.
We present a $0.55$-approximate streaming algorithm
in the adversarial-injections model which only
needs the memory to store $(O(k))^{k}$ many elements.
In particular, this number is independent of
the length of the stream.


\subsection{Notation}
For $e \in E$ and $S\subseteq E$
we write $S+e$ for the set $S\cup\{e\}$ and
$f(e \mid S)$ for  $f(S+e) - f(S)$.
Similarly, for $A, B \subseteq E$ let
$f(A\mid B) := f(A\cup B) - f(B)$.
An equivalent definition of submodularity to the one given
in the introduction states that
for any two sets $S \subseteq T \subseteq E$, and
$e \in E\setminus T$ it holds that
$f(e \mid S) \geq f(e \mid T)$.

We denote by $\sigma$ the stream of elements $E$,
by $-\infty$ and $\infty$ the start and end of the stream.
For elements $a$ and $b$, we write $\sigma[a,b]$
for the interval including $a$ and $b$ and
$\sigma(a,b)$ for the interval excluding them.
Moreover, we may assume that $f(\emptyset) = 0$,
since otherwise, we may replace
the submodular function by $f': 2^E\rightarrow \mathbb R_{\ge 0}, T \mapsto f(T) - f(\emptyset)$.

Denote the permutation of $\Eopt$ by $\pi$.
Let $o^{\pi}_i$ be the $i$'th element of $\Eopt$ in the stream
according to the order given by $\pi$.
Let $O^{\pi}_0=\emptyset$ and
$O^{\pi}_i = \{o^\pi_1,\dotsc,o^\pi_i\}$ for all $i$; hence, $\Eopt = O^\pi_k$ for any $\pi$.
Finally, let $\OPT = f(O^\pi_k)$.

\subsection{The Algorithm} \label{alg:submod}
For simplicity we present an algorithm with the assumption
that it knows the value $\OPT$.
Moreover, for the set of increases in $f$,
that is $I = \{f(e \mid S) : e \in E, S \subseteq E\}$,
we assume that $|I| \le O(k)$.
These two assumptions can be made at a marginal cost in
the approximation ratio and an insignificant increase in memory.
This follows from standard techniques.
We refer the reader to 
Appendix~\ref{app:two} and Appendix~\ref{app:three}  
for details.

As a central data-structure, the algorithm maintains
a rooted tree $T$ of height at most $k$.
Every node except for the root stores a single element from $E$.
The structure resembles a prefix tree:
Each node is associated with the solution, where the elements 
on the path from the root to it is selected.
The nodes can have at most $|I|$ children, that is, one for
each increase.
The basic idea is that for some partial solution $S\subseteq E$ (corresponding to a node) and two elements $e, e'$
with $f(e\mid S) = f(e'\mid S)$ we only consider one of the solutions
$S\cup\{e\}$ and $S\cup\{e'\}$.
More precisely, the algorithm starts with a tree consisting
only of the root.
When it reads an element $e$ from the stream, it
adds $e$ as a child to every node where
(1) the distance of the node to the root is smaller than $k$ and
(2) the node does not have a child with increase $f(e \mid S)$, where $S$ is the partial solution corresponding to this particular node.

Because of (1), the solutions are always of cardinality at most $k$.
When the stream is read completely, the algorithm selects the
best solution among all leaves. An example of the algorithm's
behavior is given in Figure~\ref{fig:subm_example}.

%

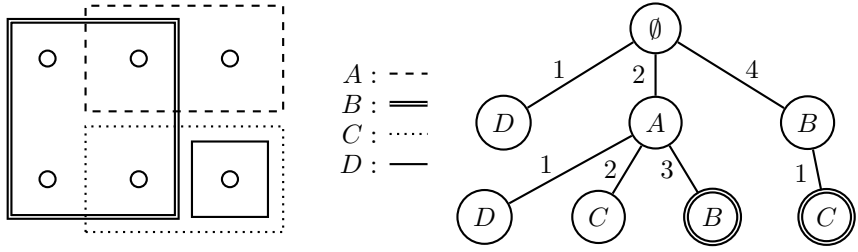
\begin{figure}
\centering
\begin{tikzpicture}
  \draw[thick] (0, 0) circle (3pt);
  \draw[thick] (1.2, 0) circle (3pt);
  \draw[thick] (2.4, 0) circle (3pt);
  \draw[thick] (0, 1.6) circle (3pt);
  \draw[thick] (1.2, 1.6) circle (3pt);
  \draw[thick] (2.4, 1.6) circle (3pt);
  \draw[thick, double] (-0.5, -0.5) rectangle (1.7, 2.1);
  \draw[thick, dotted] (0.5, -0.7) rectangle (3.1, 0.7);
  \draw[thick, dashed] (0.5, 0.9) rectangle (3.1, 2.3);
  \draw[thick] (1.9, -0.5) rectangle (2.9, 0.5);
  \draw[thick, dashed] (4.5, 1.4) -- (5, 1.4) node[left, xshift=-15pt] {$A: $};
  \draw[thick, double] (4.5, 1) -- (5, 1) node[left, xshift=-15pt] {$B: $};
  \draw[thick, dotted] (4.5, 0.6) -- (5, 0.6) node[left, xshift=-15pt] {$C: $};
  \draw[thick] (4.5, 0.2) -- (5, 0.2) node[left, xshift=-15pt] {$D: $};
  
\begin{scope}[every node/.style={circle,thick,draw}]
    \node(N) at (8 + 0, 0 +2) {$\emptyset$};
    
    \node(D) at (8 + -2, -1.25 +2) {$D$};
    
    \node(A) at (8 + 0, -1.25 +2) {$A$};
    \node[double](AB) at (8 + 0.75, -2.5 +2) {$B$};
    \node(AC) at (8 + -0.75, -2.5 +2) {$C$};
    \node(AD) at (8 + -2.25, -2.5 +2) {$D$};
    
    \node(B) at (8 + 2, -1.25 +2) {$B$};
    \node[double](BC) at (8 + 2.25, -2.5 +2) {$C$};
\end{scope}

\draw[thick] (N) -- (D) node[pos=0.7, above] {$1$};
\draw[thick] (N) -- (A) node[pos=0.5, left] {$2$};
\draw[thick] (N) -- (B) node[pos=0.7, above] {$4$};

\draw[thick] (A) -- (AB) node[pos=0.5, left] {$3$};
\draw[thick] (A) -- (AC) node[pos=0.5, left] {$2$};
\draw[thick] (A) -- (AD) node[pos=0.7, above] {$1$};

\draw[thick] (B) -- (BC) node[pos=0.6, left] {$1$};

\end{tikzpicture}
\caption{In this example, function $f$ counts the dots covered by a set of rectangles. On the right, the tree for stream $\sigma=(A,B,C,D)$ and $k=2$ is depicted. The labels
on the edges correspond to the increase in $f$. The
maximal leaves are highlighted.}
\label{fig:subm_example}
\end{figure}

\subsection{Overview of the Analysis}
For analyzing the algorithm, we will use a
sophisticated strategy to select one of the leaves and
only compare this leaf to the optimum.
We emphasize that this selection does not have to be
computed by the algorithm. In particular, it does not need to be computable by a streaming algorithm
and it can rely on knowledge of $\Eopt$ and $\Enoise$, which the algorithm does not have.
Since the algorithm always takes the best leaf,
we only need to
give a lower bound for one of them.
Before we describe this strategy,
we analyze the tree algorithm in two educational corner cases.

The first one shows that by a careful selection of a leaf the algorithm appears to
take elements based on the location of the $\Eopt$,
although it does not know them.
Let $r^\pi_i = \argmax_{e \in \sigma(-\infty,o^{\pi}_{1}]} f(e)$, that is, the most valuable element until the arrival of the first element from $\Eopt$.
Here $\argmax$ breaks ties in favor of the first element in $\sigma$.
We do not know when $o^{\pi}_1$ arrives, but we
know that the algorithm will have created a
node (with the root as its parent) for $r^\pi_1$ by then.
We define iteratively
$R^\pi_i = \{r^\pi_1,\dotsc,r^\pi_i\}$ and
$r^\pi_{i+1} = \argmax_{e \in \sigma(r^{\pi}_{i},o^{\pi}_{i+1}]} f(e \mid R^{\pi}_{i})$ for all $i$.
Again, we can be sure that $r^\pi_{i+1}$, which
yields the best increase for $R^\pi_i$ until
the arrival of $o^\pi_{i+1}$, is a appended to
the path $r^\pi_1\rightarrow\cdots \rightarrow r^\pi_i$.

This selection is inspired by the following idea.
Suppose we could partition the stream
into $k$ intervals such that in each
exactly one elements from $\Eopt$ appears.
Then a sensible approach would be to
start with an empty solution and 
greedily add the element that yields
the maximal increase
to our partial solution in each interval.
Clearly one such partition would be
$\sigma(o^\pi_i, o^\pi_{i+1}]$, $i=1,\dotsc,k$.
We note that while the selection above is similar, it does not
completely capture this.
Although $r^\pi_{i+1}$ is an element that
arrives before $o^\pi_{i+1}$, we cannot be certain
that it arrives after $o^\pi_i$. We only know
that it arrives after $r^\pi_i$.

Next, we prove that the solution $R^{\pi}_{k}$ is a $1/2$-approximation.
This already shows that the tree algorithm
is $1/2$-approximate even in the adversarial order model.
By definition of $R^\pi_i$ and $r^\pi_i$, we have
\begin{multline*}
    f(R^{\pi}_{k}) = \sum_{i=1}^{k} f(r^{\pi}_i \mid R^{\pi}_{i-1})
    \geq \sum_{i=1}^{k} f(o^{\pi}_i \mid R^{\pi}_{i-1}) \\
    = \sum_{i=1}^{k} [ f(o^{\pi}_i \mid R^{\pi}_{i-1}) - f(o^{\pi}_i \mid R^{\pi}_k) ]+ \sum_{i=1}^{k} f(o^{\pi}_i \mid R^{\pi}_k) .
\end{multline*}
Notice that due to submodularity the term
$f(o^{\pi}_i \mid R^{\pi}_{i-1}) - f(o^{\pi}_i \mid R^{\pi}_k)$ is always non-negative.
Moreover, if $o^{\pi}_i = r^{\pi}_i\in R^\pi_k$, it collapses to $f(o^{\pi}_i \mid R^{\pi}_{i-1})$.
Thus, we can bound the right term of the equation
and thereby $f(R^\pi_k)$ with
\begin{equation*}
    f(R^\pi_k) \ge \sum_{\substack{i=1 \\ r^{\pi}_i = o^{\pi}_i}}^{k} f(o^{\pi}_i \mid R^{\pi}_{i-1}) + \sum_{i=1}^{k} f(o^{\pi}_i \mid R^{\pi}_k) .
\end{equation*}
From submodularity and monotonicity of $f$ it follows that
\begin{equation*}
    \sum_{i=1}^{k} f(o^{\pi}_i \mid R^{\pi}_k)
    \ge f(O^\pi_k \mid R^\pi_k)
    = f(O^\pi_k\cup R^\pi_k) - f(R^\pi_k)
    \ge f(O^\pi_k) - f(R^\pi_k) .
\end{equation*}
Hence, we conclude that
\begin{equation*}
    2 f(R^\pi_k) \ge f(O^\pi_k) + \sum_{\substack{i=1 \\ r^{\pi}_i = o^{\pi}_i}}^{k} f(o^{\pi}_i \mid R^{\pi}_{i-1}) .
\end{equation*}
 This shows that $R^{\pi}_{k}$ is $1/2$-approximate, because $O^\pi_k = \Eopt$. Indeed, if a significant value of the elements in
 $\Eopt$ are taken, then $R^{\pi}_k$ is even better than $1/2$-approximate.
 
 Recall that the elements $\Eopt$ are ordered randomly
 in the adversarial-injections model.
 Hence, the worst-case in the analysis above is that
 $R^{\pi}_k$ is disjoint from $\Eopt$ for all realizations
 of $\pi$.  However, by a different analysis we can see that
 this case is in fact well-behaved.
 This is because the algorithm would select the
 same elements $r^\pi_1,\dotsc,r^\pi_k$ for every
 realization of $\pi$. Hence, we can safely drop
 the superscript $\pi$ in $R^\pi_i$ and $r^\pi_i$.
 Since for every element $o\in \Eopt$ there is some realization of
 $\pi$ where $o^\pi_i = o$, yet the algorithm does not pick $o^\pi_i$,
 we can bound the increase of each $r_i$ by
  \begin{equation*}
     f(r_i \mid R_{i-1})
     \ge \max_{o\in \Eopt} f(o \mid R_{i-1})
     \ge \frac 1 k \sum_{o\in \Eopt} f(o \mid R_{i-1}) .
  \end{equation*}
  By submodularity and monotonicity we get
  \begin{equation*}
     \frac 1 k \sum_{o\in \Eopt} f(o \mid R_{i-1})
     \ge \frac 1 k f(\Eopt \mid R_{i-1})
     \ge \frac 1 k (\OPT - f(R_{i-1})).
  \end{equation*}
This is the same recurrence formula as in the
classic greedy algorithm and by simple calculations
we get the closed form
\begin{equation*}
    f(R_k) \ge \left(1 - \left(1 - \frac 1 k\right)^k\right) \OPT
    \ge \left(1 - \frac 1 e\right) \OPT .
\end{equation*}
In other words, the algorithm is even 
$(1-1/e)$-approximate in this case.
In our main proof we will use a more involved
strategy for selecting a leaf.
This is to be able to combine the two approaches discussed above.

\subsection{Analysis} \label{sec:submod-analysis}
Let us first define the selection of the leaf we
are going to analyze.
The elements on the path to this leaf will be
denoted by $s^{\pi}_1,\dotsc,s^{\pi}_k$ and
we write $S^\pi_i$ for $\{s^\pi_1,\dotsc s^\pi_i\}$. The elements are defined inductively,
but as opposed to the previous section we need
in addition indices $n_1,\dotsc,n_k$.
Recall, previously we defined the $(i+1)$'th element
$r^\pi_{i+1}$ as the best increase
in $\sigma(r^\pi_i, o^\pi_{i+1}]$.
Here, we use $n_{i+1}$ to describe the index of the
element from $\Eopt$ which constitutes the end of
this interval. It is not necessarily $o^\pi_{i+1}$
anymore. We always start with $n_1 = 1$, but 
based on different cases we either
set $n_{i+1} = n_i + 1$ or $n_{i+1} = n_i$.
We underline that $n_i$ is independent of the
realization of $\pi$.
In the following, $t\in [0, 1]$ denotes a parameter
that we will specify later.   

The element $s^{\pi}_i$ will be chosen
from two candidates $u^{\pi}_{i}$ and $v^{\pi}_{i}$.
The former is the best increase of elements
excluding $o^\pi_{n_i}$, that is,
\begin{equation*}
    u^{\pi}_{i} = \begin{cases}
    \argmax_{e \in \sigma(-\infty,o^{\pi}_{n_1})} f(e) &\text{ if $i=1$,} \\    
    \argmax_{e \in \sigma(s^\pi_i,o^{\pi}_{n_i})} f(e \mid S^\pi_{i-1}) &\text{ otherwise.}
    \end{cases}
\end{equation*}
The latter is defined in the same way, except
it includes $o^\pi_{n_i}$ in the choices, that is,
\begin{equation*}
    v^{\pi}_{i} = \begin{cases}
    \argmax_{e \in \sigma(-\infty,o^{\pi}_{n_1}]} f(e) &\text{ if $i=1$,} \\    
    \argmax_{e \in \sigma(s^\pi_{i-1},o^{\pi}_{n_i}]} f(e \mid S^\pi_{i-1}) &\text{ otherwise.}
    \end{cases}
\end{equation*}
We now define the choice of $s^{\pi}_i$ and $n_{i+1}$ based on the following two cases.
Note that the cases are independent from the realization of $\pi$.
\begin{description}
\item[Case 1: $\E_{\pi} f(u^{\pi}_{i} \mid S^{\pi}_{i-1}) \geq t \cdot \E_{\pi} f(o^{\pi}_{n_i} \mid S^{\pi}_{i-1})$.]
In this case, we set $s^{\pi}_i = u^{\pi}_{i}$ and $n_{i+1}=n_{i}$.
Notice that this means $s^\pi_i$ is chosen independently from $o^\pi_{n_i}$.
In other words, we did not see $o^\pi_{n_i}$, yet.
The element $o^\pi_{n_i}$ is still each of the remaining elements in $\Eopt$ with equal probability.
In the analysis this is beneficial, because
the distribution of $o^\pi_{n_i},\dotsc,o^\pi_{k}$
remains unchanged.
This is similar to the second case in the previous section.
\item[Case 2: $\E_{\pi} f(u^{\pi}_{i} \mid S^{\pi}_{i-1}) < t \cdot \E_{\pi} f(o^{\pi}_{n_i} \mid S^{\pi}_{i-1})$.]
Here, set $s^{\pi}_i = v^{\pi}_{i}$ and $n_{i+1}=n_{i}+1$. Now the distribution of $o^\pi_{i},\dotsc,o^\pi_k$ can
change. However, a considerable value of $s^\pi_i$ over different $\pi$ comes from taking $o^\pi_{n_i}$.
As indicated by the first case in the previous section
this will improve the guarantee of the algorithm.
\end{description}
The solution $S^\pi_k$ corresponds to a leaf in the tree algorithm.
Clearly, $u^\pi_1$ and $v^\pi_1$ are children of the root. Hence, $s^\pi_1$ is also a child. Then for induction we assume $s^\pi_i$
is a node, which implies $u^\pi_{i+1}$ and $v^\pi_{i+1}$ are also nodes:
The elements $u^\pi_{i+1}$ and $v^\pi_{i+1}$ are the first elements
after $s^\pi_i$ with the respective gains
($f(u^\pi_{i+1} \mid S^\pi_i)$ and $f(v^\pi_{i+1} \mid S^\pi_i)$). Hence, $s^\pi_{i+1}$ is a child of $s^\pi_i$.  

In order to bound $\E_\pi f(S^\pi_k)$, we will study more broadly
all values of $\E_\pi f(S^\pi_h)$ where $h\le k$.
To this end, we define a recursive formula $R(k,h)$
and prove that it bounds $\E_\pi f(S^\pi_h)/\OPT$ from below.
Then using basic calculus we will show that $R(k, k)\ge 0.5506$
for all $k$.
Initialize $R(k,0)=0$ for all $k$. 
Then let $R(k,h)$, $h \le k$, be defined by
\begin{equation*}
    R(k,h) = \min \left\{ \frac{t}{k} + \left(1 - \frac{t}{k} \right) R(k,h-1)\,,\;\;
    \frac{1}{k} + \left(1 - \frac{1+t}{k} \right) R(k-1,h-1)\,,\;\;
    \frac{1}{1+t} \right\}.
\end{equation*}
\begin{lemma}  \label{lem:sub-analysis}
For all instances of the problem and $h\le k$,
the solution $S^\pi_{h}$ as defined above satisfies
$\E_\pi f(S^\pi_h) \ge R(k,h) \OPT$ .
\end{lemma}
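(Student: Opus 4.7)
The plan is to prove the lemma by induction, treating the pair $(k,h)$ jointly since the recurrence for $R(k,h)$ references both $R(k,h-1)$ and $R(k-1,h-1)$. The base case $h=0$ is immediate because $R(k,0)=0$ and $\E_\pi f(S^\pi_0)=f(\emptyset)=0$. For the inductive step, since $R(k,h)$ is the minimum of three expressions, it suffices to establish three separate lower bounds on $\E_\pi f(S^\pi_h)/\OPT$, one for each term; the claim then follows by taking the minimum.

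For the first term $\frac{t}{k}+(1-\frac{t}{k})R(k,h-1)$, I would show the single-step inequality
\[
    \E_\pi f(s^\pi_h \mid S^\pi_{h-1}) \;\ge\; \frac{t}{k}\bigl(\OPT - \E_\pi f(S^\pi_{h-1})\bigr)
\]
in both cases and then plug in the inductive hypothesis. The key observation is that the case splits are deterministic (defined through expectations over $\pi$), so at step $h$ the identity of $o^\pi_{n_h}$, conditioned on the history, is still uniform over the $k-n_h+1\le k$ optimum elements that have not yet been exposed. In Case 1, the case condition $\E f(u^\pi_h \mid S^\pi_{h-1})\ge t\,\E f(o^\pi_{n_h}\mid S^\pi_{h-1})$, combined with an averaging argument using submodularity and monotonicity yielding $\E f(o^\pi_{n_h}\mid S^\pi_{h-1})\ge (\OPT-\E f(S^\pi_{h-1}))/k$, gives the desired inequality. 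In Case 2, the argmax definition directly gives $\E f(v^\pi_h\mid S^\pi_{h-1})\ge \E f(o^\pi_{n_h}\mid S^\pi_{h-1})$, a strictly stronger bound.

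The second term $\frac{1}{k}+(1-\frac{1+t}{k})R(k-1,h-1)$ should come from the Case~2 structure, where the algorithm effectively absorbs the optimum element $o^\pi_{n_h}$ via $v^\pi_h$, leaving the residual optimum of size $k-1$ on $\Eopt\setminus\{o^\pi_{n_h}\}$ as the correct inductive instance for $R(k-1,h-1)$. The $1/k$ contribution comes from the direct gain of capturing $o^\pi_{n_h}$, while the extra $t/k$ piece in the coefficient $(1+t)/k$ reflects the case condition bounding the $u^\pi_h$ ``slack''. The third term $\frac{1}{1+t}$ is a fixed-point baseline proved directly: combining the two basic gain inequalities ($\E f(s^\pi_h\mid S^\pi_{h-1})\ge t\,\E f(o^\pi_{n_h}\mid S^\pi_{h-1})$ in Case 1 and $\ge \E f(o^\pi_{n_h}\mid S^\pi_{h-1})$ in Case 2) with a monotonicity/ratio argument shows that once $\E_\pi f(S^\pi_{h-1})/\OPT\ge 1/(1+t)$, the same ratio is preserved for $S^\pi_h$, so this bound holds invariantly throughout the induction.

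I expect the hardest step to be the Case~2 reduction yielding the second term: formally setting up the residual-$(k-1)$-optimum instance, which still depends on $\pi$ through $o^\pi_{n_h}$, arguing that the remaining optimum elements stay uniformly distributed after conditioning, and verifying that the coefficient $1-(1+t)/k$ emerges precisely from decomposing the expected gain at step $h$ into the $o^\pi_{n_h}$-capture part and the case-condition contribution on $u^\pi_h$. The final numerical claim that $R(k,k)\ge 0.5506$ for the tuned choice of $t$ then reduces to a routine analysis of the recurrence.
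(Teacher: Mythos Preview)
Your proposal has the right ingredients, but the inductive structure is off in two places.

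First, the induction peels off the \emph{first} selection $s^\pi_1$, not the last $s^\pi_h$. This is what makes the $R(k-1,h-1)$ term accessible: when Case~2 holds at step~$1$, one passes to the sub-instance with stream $\sigma(s^\pi_1,\infty)$, function $f(\cdot\mid s^\pi_1)$, and optimum $\Eopt\setminus\{o^\pi_1\}$ of size $k-1$; the remaining selections $s^\pi_2,\dotsc,s^\pi_h$ are exactly the first $h-1$ selections on this sub-instance, so the inductive hypothesis (quantified over all instances) yields $R(k-1,h-1)$. Your backward peel cannot do this, since $S^\pi_{h-1}$ lives on the original $k$-instance and the hypothesis only gives $R(k,h-1)$. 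Likewise, your averaging claim $\E f(o^\pi_{n_h}\mid S^\pi_{h-1})\ge(\OPT-\E f(S^\pi_{h-1}))/k$ at a general step $h$ is not the routine submodularity bound you suggest: once $n_h>1$, the already-exposed optimum elements $O^\pi_{n_h-1}$ need not lie in $S^\pi_{h-1}$, so one cannot simply lower-bound $f(\Eopt\setminus O^\pi_{n_h-1}\mid S^\pi_{h-1})$ by $\OPT-f(S^\pi_{h-1})$. At step~$1$ this is immediate because $n_1=1$.

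Second, you do not need to (and cannot in general) establish all three terms. The proof shows the first term when Case~1 holds at step~$1$ and the second term when Case~2 holds; either way the minimum is a valid lower bound. The third term $1/(1+t)$ is never argued as a bound on $\E f(S^\pi_h)/\OPT$; it is only an a~priori cap $R(\cdot,\cdot)\le 1/(1+t)$, used inside the two recursive calculations to keep the relevant coefficients nonnegative when substituting lower bounds (for instance, replacing $\E_\pi f(s^\pi_1)$ by $\E_\pi f(o^\pi_1)$ in Case~2 requires $1-(1+t)R(k-1,h-1)\ge 0$). The delicate Case~2 step you anticipated is the inequality $\E_\pi f(s^\pi_1\cup o^\pi_1)\le \E_\pi f(o^\pi_1)+t\,\E_\pi f(s^\pi_1)$, which follows because $s^\pi_1\ne o^\pi_1$ forces $s^\pi_1=u^\pi_1$ and then the Case~2 condition bounds $\E_\pi f(u^\pi_1)$.
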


\begin{proof}
The proof is by induction over $h$.
For $h=0$, the statement holds
as $R(k,0) \OPT = 0 = \E_\pi f(S_0^\pi)$.
Let $h > 0$ and suppose the statement of the lemma holds
with $h-1$ for all instances of the problem.
Suppose we are given an instance with $k\ge h$.
We distinguish the two
cases $s^\pi_1=u^\pi_1$ and $s^\pi_1=v^\pi_1$. 

First, consider $\E_\pi f(u^{\pi}_{1}) \geq t \cdot \E_{\pi} f(o^{\pi}_{1})$,
which implies that $s^\pi_1=u^\pi_1$.
Note that $u^\pi_1$ is
the best element in $\sigma(-\infty,o^\pi_1)$,
consequently, its choice is independent from the
realization of $\pi$. Let us drop the superscript
in $u^\pi_1$ and $s^\pi_1$ for clarity.
We construct a new instance mimicking
the subtree of $s_1$.
Formally, our new instance still has the same
$k$ elements $\Eopt$, i.e., $k' = k$.
The stream is $\sigma' = \sigma(s^\pi_1,\infty)$ and,
  the submodular function $f':2^{U} \rightarrow \mathbb{R}$, $f'(T) \mapsto f(T \mid s_{1})$.
In this instance we have $\OPT'=f'(\Eopt)=f(\Eopt \mid s_1) \ge \OPT - f(s_1)$. It is easy to see that the elements
$s^{\prime\pi}_1,\dotsc,s^{\prime\pi}_{h-1}$ chosen in the new
instance correspond exactly
to the elements $s^\pi_2,\dotsc,s^\pi_h$.
Hence, with the induction hypothesis we get
\begin{equation*}
    \E_\pi f(S^\pi_h)
    = f(s_1) + \E_\pi f(S^{\pi}_{h} \mid s_1)
    = f(s_1) + \E_\pi f'(S^{\prime\pi}_{h-1})
    \ge f(s_1) + R(k, h-1) (\OPT - f(s_1)) .
\end{equation*}
By assumption we have $f(s_1) \ge t \cdot \E_\pi f(o^\pi_i) \geq t\cdot \OPT / k$. Together with $R(k, h-1)\le 1/(1+t) \le 1$
we calculate
\begin{equation*}
    f(s_1) + R(k, h-1) (\OPT - f(s_1))
    \ge \frac{t}{k} \OPT + R(k, h - 1)\left(1 - \frac t k\right) \OPT .
\end{equation*}
The right-hand side is by definition at least
$R(k, h) \OPT$.

Now we turn to the case
$\E_\pi f(u^{\pi}_{1}) < t \cdot \E_{\pi} f(o^{\pi}_{1})$,
which means $s^\pi_1 = v^\pi_1$ is chosen.
Similar to the previous case, we construct a new instance.
After taking $s^\pi_1$, our new instance has $k'=k-1$
elements $\Eopt' = \Eopt\setminus\{o^\pi_{1}\}$,
stream $\sigma' = \sigma(s_1,\infty)$, and
submodular function $f':2^E \rightarrow \mathbb{R}$, $f(T)\mapsto f(T \mid s^\pi_{1})$.
Thus, $\OPT'=f'(\Eopt') = f(\Eopt\setminus\{o^\pi_1\} \mid s^\pi_1) \ge \OPT - f(s^\pi_1 \cup o^\pi_1)$.
We remove $o^\pi_1$ from $\Eopt$,
because $s^\pi_1 = v^\pi_1$ depends on it.
The distribution of $o^\pi_2,\dotsc,o^\pi_k$
when conditioning on the value of $o^\pi_1$ (and
thereby the choice of $s^\pi_1$) is
still a uniformly random permutation of $\Eopt'$.
Like in the previous case, we can see that
$S^{\prime\pi}_{h-1} = S^\pi_h \setminus \{s^\pi_1\}$ and we can
apply the induction hypothesis. First, however,
let us examine $\E_\pi f(s^\pi_1\cup o^\pi_1)$.
Since we know that whenever $s^\pi_1 \neq o^\pi_1$ we
have $s^\pi_1 = u^\pi_1$, it follows that
\begin{equation*}
   \mathbb P_\pi [s^\pi_1 \neq o^\pi_1] \cdot \E_\pi [f(s_1) \mid s^\pi_1 \neq o^\pi_1]
    \le \E_\pi f(u^\pi_1)
    < t \cdot \E_\pi f(o^\pi_1) \le t \cdot \E_\pi f(s^\pi_1) .
\end{equation*}
Hence, we deduce
\begin{equation*}
   \E_\pi f(s^\pi_1\cup o^\pi_1)
   \le \E_\pi f(o^\pi_1) + \mathbb P_\pi[s^\pi_1 \neq o^\pi_1] \cdot \E_\pi [f(s_1) \mid s^\pi_1 \neq o^\pi_1] \le \E_\pi f(o^\pi_1) + t\cdot \E_\pi f(s^\pi_1) .
\end{equation*}
We are ready to prove the bound on
$\E_\pi f(S^\pi_h)$.
By induction hypothesis, we get
\begin{align*}
    \E_\pi f(S^\pi_h) &= \E_\pi f(s^\pi_1) + \E_\pi f'(S^{\prime\pi}_{h-1}) \\
    &\ge \E_\pi f(s^\pi_1) + R(k - 1, h - 1) (\OPT - \E_\pi f(s^\pi_1\cup o^\pi_1)) .
\end{align*}
Inserting the bound on $\E_\pi f(s^\pi_1\cup o^\pi_1)$
we know that the right-hand side is at least 
\begin{equation*}
    \E_\pi f(s^\pi_1) + R(k - 1, h - 1) (\OPT - \E_\pi f(o^\pi_1) - t\cdot \E_\pi f(s^\pi_1)) .
\end{equation*}
Using that $f(s^\pi_1) \ge f(o^\pi_1)$ for all $\pi$ and
$R(k - 1, h - 1) \cdot t \le t/(1+t) \le 1$ we bound
the previous term from below by
\begin{equation*}
    \E_\pi f(o^\pi_1) + R(k - 1, h - 1) (\OPT - (1 + t)\E_\pi f(o^\pi_1)) .
\end{equation*}
Finally, we use that $\E_\pi f(o_1^\pi) \ge \OPT / k$ and
$R(k - 1, h - 1) (1 + t) \le 1$ to arrive at
\begin{equation*}
    \frac 1 k \OPT + R(k - 1, h - 1) \left(\OPT - \frac{1 + t}{k}\OPT \right) \ge R(k, h) \OPT ,
\end{equation*}
which concludes the proof.
\end{proof}
With $t = 0.8$ we are able to show
that for sufficiently large $k$ the minimum in the definition of $R(k, k)$ is always attained by the first term.
Then, after calculating a lower bound on $R(k, k)$
for small values, we can easily derive a general bound.
\begin{lemma} \label{lem:submod-final}
With $t = 0.8$ for all positive integers k it holds
that $R(k,k)\ge 0.5506$ .
\end{lemma}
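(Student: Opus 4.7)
The plan is to follow the two-regime strategy implicit in the statement. Observe that the target $0.5506$ sits just below two natural values of the recurrence: the cap $1/(1+t) = 5/9 \approx 0.5556$ coming from the third term, and the asymptote $1 - e^{-t} = 1 - e^{-0.8} \approx 0.5507$ obtained by iterating the first term as $k \to \infty$. I will argue that the former governs $R(k,k)$ for small $k$ (where the third term binds) and the latter governs $R(k,k)$ for large $k$ (where the first term binds), so in either regime the value comfortably exceeds $0.5506$.

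For $k$ above a suitable threshold $K_0$, I would prove by induction on $h$ that the first term attains the minimum in the definition of $R(k, h)$, and hence $R(k, h) = 1 - (1 - t/k)^h$ for every $h \le k$. The inductive step requires verifying that (i) the first term is at most the third, which reduces to $(1 - t/k)^h \ge t/(1+t)$, and (ii) the first term is at most the second, which compares the current row with the previous one. For (i), the crucial numerical fact is $e^{-t} > t/(1+t)$ at $t = 0.8$ (indeed $e^{-0.8} \approx 0.4493 > 4/9 \approx 0.4444$), and combining this with the monotone convergence $(1 - t/k)^k \nearrow e^{-t}$ pins down how large $K_0$ must be. For (ii), I would run a joint induction on $(k, h)$ that maintains the parallel identity $R(k-1, h-1) = 1 - (1 - t/(k-1))^{h-1}$, and then verify the resulting algebraic inequality by elementary manipulation. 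Once the induction goes through, $R(k, k) = 1 - (1 - t/k)^k \ge 1 - e^{-t} > 0.5506$.

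The remaining finitely many cases $k < K_0$ are handled by directly unrolling the recurrence; in this regime the third term $1/(1+t) \approx 0.5556$ binds in $R(k, k)$, so the bound $R(k, k) \ge 0.5506$ is immediate. Combining the two regimes yields the desired uniform conclusion.

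The main technical obstacle is inequality (ii) of the inductive step, since the recurrence intertwines the rows $k$ and $k-1$ and one must therefore carry both simultaneously while controlling how $(1 - t/(k-1))^{h-1}$ compares with $(1 - t/k)^h$. The delicacy of the bound is reflected in the choice $t = 0.8$: it sits just below the critical value (near $t \approx 0.806$) at which $e^{-t}$ and $t/(1+t)$ coincide and inequality (i) breaks, while pushing $t$ smaller would drop $1 - e^{-t}$ below $0.5506$.
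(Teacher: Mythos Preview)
Your two-regime strategy is exactly the paper's approach. The paper sets the threshold at $K_0=1000$: for $k\ge 1000$ it proves by induction that the first term always attains the minimum (your inequality~(ii) is indeed the crux, handled there by a lengthy Taylor-expansion comparison of $k(1-t/k)^k$ against $(k-1)(1-t/(k-1))^k$), giving $R(k,k)=1-(1-t/k)^k\ge 1-e^{-0.8}>0.5506$; for $k\le 1000$ it simply verifies the bound by computer.

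Your description of the small-$k$ regime, however, has a gap. The claim that ``the third term $1/(1+t)$ binds in $R(k,k)$'' for all $k<K_0$ is false unless $K_0$ is tiny. Your own condition~(i), namely $(1-t/k)^k\ge t/(1+t)$, already holds from roughly $k\approx 30$ onward, and from that point the first-term closed form $1-(1-t/k)^k$ lies strictly below $5/9$, so the third term cannot be the minimum. But condition~(ii), which you correctly flag as the hard part, needs $k$ much larger (the paper requires $k\ge 1000$). Hence there is a middle range, roughly $30\lesssim k<K_0$, covered by neither your ``third term binds'' claim nor your large-$k$ induction. The fix is precisely what the paper does: for all $k\le K_0$ one computes the full table $R(\cdot,\cdot)$ and checks $R(k,k)\ge 0.5506$ directly. ``Directly unrolling the recurrence'' is the right move, but the reason those values exceed $0.5506$ in the middle range is not that the third term binds---they are already governed by the first term and sit between $1-e^{-t}$ and $5/9$.
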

Figure~\ref{fig:recurrence} contains
a diagram (generated by computer calculation),
which shows that the formula tends to
a value between $0.5506$ and $0.5507$ for
$k\in \{0,\dotsc,10000\}$. The proof requires tedious and mechanical calculations and is thus deferred to Appendix~\ref{app:one}.
 \begin{figure}
     \centering
     \includegraphics[width=\textwidth]{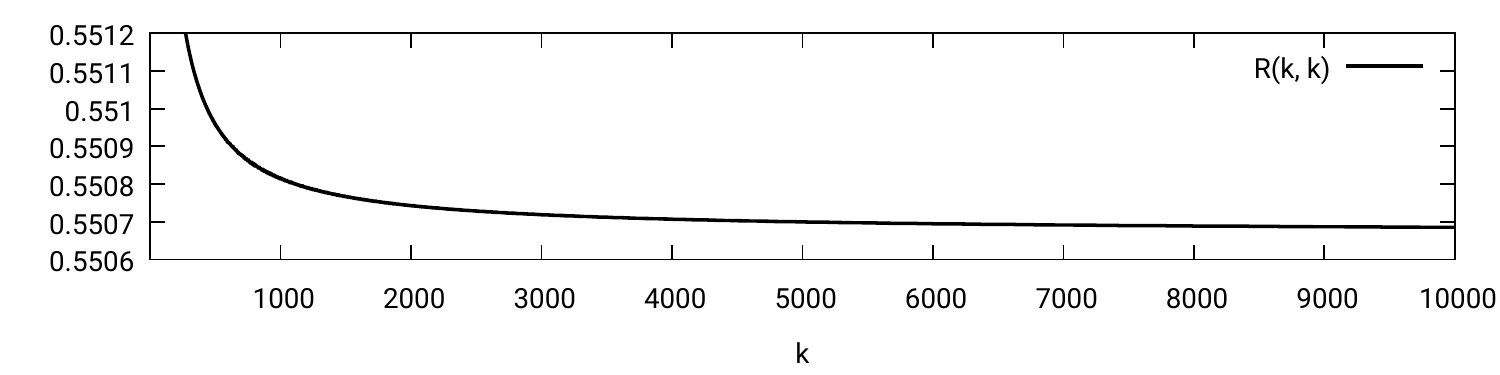}
     \caption{Values of the recurrence formula for $t = 0.8$.}
     \label{fig:recurrence}
 \end{figure}



\section{Conclusion and Open Problems}\label{sec:conclusion}
In this paper, we introduced a semi-random model called adversarial-injections with the motivation of eliminating algorithms that overfit to random-order streams while still being easier than adversarial-order streams. We studied two classical problems in combinatorial optimization in this model.

For unweighted matching, we could beat $1/2$ in the streaming setting whereas we observed from \cite{gamlath2019online} that we could not beat $1/2$ in the online setting. This also makes our model non-trivial as there is a separation between the online and streaming setting.  

For monotone submodular maximization with cardinality constraint $k$, we obtained a $0.55$ approximation algorithm albeit with a huge memory footprint but importantly independent of $n$ (universe size). The obvious open question is whether one can design a $(1-1/e)$-approximation algorithm which stores number of elements that is independent of $n$. Does our algorithm have an approximation ratio of $1-1/e$? We observed that the algorithm in \cite{norouzi2018beyond} is a $1/2 + \eps$ approximation for a very small $\eps > 0$. The algorithm stores $\poly(k)$ elements. Can one design an algorithm that stores only $\poly(k)$ elements and beats $1/2$ by a significant constant or, even better, gets $1-1/e$? 

\bibliography{ref}

\begin{thebibliography}{10}

\bibitem{agrawal2018submodular}
Shipra Agrawal, Mohammad Shadravan, and Cliff Stein.
\newblock Submodular secretary problem with shortlists.
\newblock In {\em 10th Innovations in Theoretical Computer Science Conference,
  {ITCS} 2019, January 10-12, 2019, San Diego, California, {USA}}, pages
  1:1--1:19, 2019.
\newblock \href {https://doi.org/10.4230/LIPIcs.ITCS.2019.1}
  {\path{doi:10.4230/LIPIcs.ITCS.2019.1}}.

\bibitem{badanidiyuru2014streaming}
Ashwinkumar Badanidiyuru, Baharan Mirzasoleiman, Amin Karbasi, and Andreas
  Krause.
\newblock Streaming submodular maximization: massive data summarization on the
  fly.
\newblock In {\em The 20th {ACM} {SIGKDD} International Conference on Knowledge
  Discovery and Data Mining, {KDD} '14, New York, NY, {USA} - August 24 - 27,
  2014}, pages 671--680, 2014.
\newblock \href {https://doi.org/10.1145/2623330.2623637}
  {\path{doi:10.1145/2623330.2623637}}.

\bibitem{bradac2019robust}
Domagoj Bradac, Anupam Gupta, Sahil Singla, and Goran Zuzic.
\newblock Robust algorithms for the secretary problem.
\newblock In {\em 11th Innovations in Theoretical Computer Science Conference,
  {ITCS} 2020, January 12-14, 2020, Seattle, Washington, {USA}}, pages
  32:1--32:26, 2020.
\newblock \href {https://doi.org/10.4230/LIPIcs.ITCS.2020.32}
  {\path{doi:10.4230/LIPIcs.ITCS.2020.32}}.

\bibitem{buchbinder2019online}
Niv Buchbinder, Danny Segev, and Yevgeny Tkach.
\newblock Online algorithms for maximum cardinality matching with edge
  arrivals.
\newblock {\em Algorithmica}, 81(5):1781--1799, 2019.
\newblock \href {https://doi.org/10.1007/s00453-018-0505-7}
  {\path{doi:10.1007/s00453-018-0505-7}}.

\bibitem{chakrabarti2015submodular}
Amit Chakrabarti and Sagar Kale.
\newblock Submodular maximization meets streaming: matchings, matroids, and
  more.
\newblock {\em Math. Program.}, 154(1-2):225--247, 2015.
\newblock \href {https://doi.org/10.1007/s10107-015-0900-7}
  {\path{doi:10.1007/s10107-015-0900-7}}.

\bibitem{edmonds1965paths}
Jack Edmonds.
\newblock Paths, trees, and flowers.
\newblock {\em Canadian Journal of mathematics}, 17:449--467, 1965.

\bibitem{epstein2012improved}
Leah Epstein, Asaf Levin, Danny Segev, and Oren Weimann.
\newblock Improved bounds for randomized preemptive online matching.
\newblock {\em Inf. Comput.}, 259(1):31--40, 2018.
\newblock \href {https://doi.org/10.1016/j.ic.2017.12.002}
  {\path{doi:10.1016/j.ic.2017.12.002}}.

\bibitem{esfandiari2015online}
Hossein Esfandiari, Nitish Korula, and Vahab Mirrokni.
\newblock Online allocation with traffic spikes: Mixing adversarial and
  stochastic models.
\newblock In {\em Proceedings of the Sixteenth ACM Conference on Economics and
  Computation}, pages 169--186, 2015.

\bibitem{farhadi2020approximate}
Alireza Farhadi, Mohammad~Taghi Hajiaghayi, Tung Mai, Anup Rao, and Ryan~A.
  Rossi.
\newblock Approximate maximum matching in random streams.
\newblock In {\em Proceedings of the 2020 {ACM-SIAM} Symposium on Discrete
  Algorithms, {SODA} 2020, Salt Lake City, UT, USA, January 5-8, 2020}, pages
  1773--1785, 2020.
\newblock \href {https://doi.org/10.1137/1.9781611975994.108}
  {\path{doi:10.1137/1.9781611975994.108}}.

\bibitem{feige}
Uriel Feige.
\newblock A threshold of ln \emph{n} for approximating set cover.
\newblock {\em J. {ACM}}, 45(4):634--652, 1998.
\newblock \href {https://doi.org/10.1145/285055.285059}
  {\path{doi:10.1145/285055.285059}}.

\bibitem{feige2018tighter}
Uriel Feige.
\newblock Tighter bounds for online bipartite matching.
\newblock {\em CoRR}, abs/1812.11774, 2018.
\newblock URL: \url{http://arxiv.org/abs/1812.11774}, \href
  {http://arxiv.org/abs/1812.11774} {\path{arXiv:1812.11774}}.

\bibitem{feldman_etal20}
Moran Feldman, Ashkan Norouzi-Fard, Ola Svensson, and Rico Zenklusen.
\newblock The one-way communication complexity of submodular maximization with
  applications to streaming and robustness.
\newblock In {\em Proceedings of the Fifty-Second Annual {ACM} on Symposium on
  Theory of Computing, {STOC} (to appear)}, 2020.

\bibitem{gamlath2019weighted}
Buddhima Gamlath, Sagar Kale, Slobodan Mitrovic, and Ola Svensson.
\newblock Weighted matchings via unweighted augmentations.
\newblock In {\em Proceedings of the 2019 {ACM} Symposium on Principles of
  Distributed Computing, {PODC} 2019, Toronto, ON, Canada, July 29 - August 2,
  2019}, pages 491--500, 2019.
\newblock \href {https://doi.org/10.1145/3293611.3331603}
  {\path{doi:10.1145/3293611.3331603}}.

\bibitem{gamlath2019online}
Buddhima Gamlath, Michael Kapralov, Andreas Maggiori, Ola Svensson, and David
  Wajc.
\newblock Online matching with general arrivals.
\newblock In {\em 60th {IEEE} Annual Symposium on Foundations of Computer
  Science, {FOCS} 2019, Baltimore, Maryland, USA, November 9-12, 2019}, pages
  26--37, 2019.
\newblock \href {https://doi.org/10.1109/FOCS.2019.00011}
  {\path{doi:10.1109/FOCS.2019.00011}}.

\bibitem{guha_mcg_06}
Sudipto Guha and Andrew McGregor.
\newblock Approximate quantiles and the order of the stream.
\newblock In {\em Proceedings of the Twenty-Fifth {ACM} {SIGACT-SIGMOD-SIGART}
  Symposium on Principles of Database Systems, June 26-28, 2006, Chicago,
  Illinois, {USA}}, pages 273--279, 2006.
\newblock \href {https://doi.org/10.1145/1142351.1142390}
  {\path{doi:10.1145/1142351.1142390}}.

\bibitem{guruganesh2017online}
Guru~Prashanth Guruganesh and Sahil Singla.
\newblock Online matroid intersection: Beating half for random arrival.
\newblock In {\em Integer Programming and Combinatorial Optimization - 19th
  International Conference, {IPCO} 2017, Waterloo, ON, Canada, June 26-28,
  2017, Proceedings}, pages 241--253, 2017.
\newblock \href {https://doi.org/10.1007/978-3-319-59250-3\_20}
  {\path{doi:10.1007/978-3-319-59250-3\_20}}.

\bibitem{huang2019tight}
Zhiyi Huang, Binghui Peng, Zhihao~Gavin Tang, Runzhou Tao, Xiaowei Wu, and
  Yuhao Zhang.
\newblock Tight competitive ratios of classic matching algorithms in the fully
  online model.
\newblock In {\em Proceedings of the Thirtieth Annual {ACM-SIAM} Symposium on
  Discrete Algorithms, {SODA} 2019, San Diego, California, USA, January 6-9,
  2019}, pages 2875--2886, 2019.
\newblock \href {https://doi.org/10.1137/1.9781611975482.178}
  {\path{doi:10.1137/1.9781611975482.178}}.

\bibitem{kapralov2013better}
Michael Kapralov.
\newblock Better bounds for matchings in the streaming model.
\newblock In {\em Proceedings of the Twenty-Fourth Annual {ACM-SIAM} Symposium
  on Discrete Algorithms, {SODA} 2013, New Orleans, Louisiana, USA, January
  6-8, 2013}, pages 1679--1697, 2013.
\newblock \href {https://doi.org/10.1137/1.9781611973105.121}
  {\path{doi:10.1137/1.9781611973105.121}}.

\bibitem{karp1986constructing}
Richard~M. Karp, Eli Upfal, and Avi Wigderson.
\newblock Constructing a perfect matching is in random {NC}.
\newblock {\em Combinatorica}, 6(1):35--48, 1986.
\newblock \href {https://doi.org/10.1007/BF02579407}
  {\path{doi:10.1007/BF02579407}}.

\bibitem{karp1990optimal}
Richard~M. Karp, Umesh~V. Vazirani, and Vijay~V. Vazirani.
\newblock An optimal algorithm for on-line bipartite matching.
\newblock In {\em Proceedings of the 22nd Annual {ACM} Symposium on Theory of
  Computing, May 13-17, 1990, Baltimore, Maryland, {USA}}, pages 352--358,
  1990.
\newblock \href {https://doi.org/10.1145/100216.100262}
  {\path{doi:10.1145/100216.100262}}.

\bibitem{KesselheimKN15}
Thomas Kesselheim, Robert~D. Kleinberg, and Rad Niazadeh.
\newblock Secretary problems with non-uniform arrival order.
\newblock In {\em Proceedings of the Forty-Seventh Annual {ACM} on Symposium on
  Theory of Computing, {STOC} 2015, Portland, OR, USA, June 14-17, 2015}, pages
  879--888, 2015.
\newblock \href {https://doi.org/10.1145/2746539.2746602}
  {\path{doi:10.1145/2746539.2746602}}.

\bibitem{kesselheim_et_al17}
Thomas Kesselheim and Andreas T{\"{o}}nnis.
\newblock Submodular secretary problems: Cardinality, matching, and linear
  constraints.
\newblock In {\em Approximation, Randomization, and Combinatorial Optimization.
  Algorithms and Techniques, {APPROX/RANDOM} 2017, August 16-18, 2017,
  Berkeley, CA, {USA}}, pages 16:1--16:22, 2017.
\newblock \href {https://doi.org/10.4230/LIPIcs.APPROX-RANDOM.2017.16}
  {\path{doi:10.4230/LIPIcs.APPROX-RANDOM.2017.16}}.

\bibitem{konrad2012maximum}
Christian Konrad, Fr{\'{e}}d{\'{e}}ric Magniez, and Claire Mathieu.
\newblock Maximum matching in semi-streaming with few passes.
\newblock In {\em Approximation, Randomization, and Combinatorial Optimization.
  Algorithms and Techniques - 15th International Workshop, {APPROX} 2012, and
  16th International Workshop, {RANDOM} 2012, Cambridge, MA, USA, August 15-17,
  2012. Proceedings}, pages 231--242. 2012.
\newblock \href {https://doi.org/10.1007/978-3-642-32512-0\_20}
  {\path{doi:10.1007/978-3-642-32512-0\_20}}.

\bibitem{lovasz1979determinants}
L{\'{a}}szl{\'{o}} Lov{\'{a}}sz.
\newblock On determinants, matchings, and random algorithms.
\newblock In {\em Fundamentals of Computation Theory, {FCT} 1979, Proceedings
  of the Conference on Algebraic, Arthmetic, and Categorial Methods in
  Computation Theory, Berlin/Wendisch-Rietz, Germany, September 17-21, 1979},
  pages 565--574, 1979.

\bibitem{mcgregor2019better}
Andrew McGregor and Hoa~T. Vu.
\newblock Better streaming algorithms for the maximum coverage problem.
\newblock {\em Theory Comput. Syst.}, 63(7):1595--1619, 2019.
\newblock \href {https://doi.org/10.1007/s00224-018-9878-x}
  {\path{doi:10.1007/s00224-018-9878-x}}.

\bibitem{mulmuley1987matching}
Ketan Mulmuley, Umesh~V. Vazirani, and Vijay~V. Vazirani.
\newblock Matching is as easy as matrix inversion.
\newblock {\em Combinatorica}, 7(1):105--113, 1987.
\newblock \href {https://doi.org/10.1007/BF02579206}
  {\path{doi:10.1007/BF02579206}}.

\bibitem{nemhauser1978best}
George~L. Nemhauser and Laurence~A. Wolsey.
\newblock Best algorithms for approximating the maximum of a submodular set
  function.
\newblock {\em Math. Oper. Res.}, 3(3):177--188, 1978.
\newblock \href {https://doi.org/10.1287/moor.3.3.177}
  {\path{doi:10.1287/moor.3.3.177}}.

\bibitem{nemhauser1978analysis}
George~L Nemhauser, Laurence~A Wolsey, and Marshall~L Fisher.
\newblock An analysis of approximations for maximizing submodular set
  functions—i.
\newblock {\em Mathematical programming}, 14(1):265--294, 1978.

\bibitem{norouzi2018beyond}
Ashkan Norouzi{-}Fard, Jakub Tarnawski, Slobodan Mitrovic, Amir Zandieh,
  Aidasadat Mousavifar, and Ola Svensson.
\newblock Beyond 1/2-approximation for submodular maximization on massive data
  streams.
\newblock In {\em Proceedings of the 35th International Conference on Machine
  Learning, {ICML} 2018, Stockholmsm{\"{a}}ssan, Stockholm, Sweden, July 10-15,
  2018}, pages 3826--3835, 2018.
\newblock URL: \url{http://proceedings.mlr.press/v80/norouzi-fard18a.html}.

\end{thebibliography}
\appendix


\section{Analysis of the streaming algorithm for maximum matching}

We start by briefly recalling the algorithm \textsc{Match}. We refer the reader to Section \ref{alg:mat} for complete details. 

\textsc{Match} runs two algorithms in parallel and outputs the better solution. The first algorithm runs greedy whereas the second algorithm runs greedy up to certain point (Phase 1), and then starts collecting 3-augmenting paths. 

Now we recall some definitions introduced in Section~\ref{sec:matching} that we will use throughout the section. Let $M^*$ denote the maximum matching, $M_1$ the variable that is updated by the first algorithm and $M_2$ the matching produced by the second algorithm. Let $M^{(1)}_2$ denote this matching of the second algorithm just after Phase 1 ends.

\subsection{Removing assumption that $|M^*|$ is known} \label{app:matchopt}
Our algorithm \textsc{Match} assumed that it knew $|M^*|$. This is because the second algorithm that is run in \textsc{Match} needs to know $|M^*|$ as it starts collecting 3-augmenting paths when it has collected at least $|M^*|(1/2-\eps)$ edges. Until this point in the stream, it has collected exactly $|M_1|$ edges. By definition, $|M_1|$ is also a lower bound on $|M^*|$. Hence at any point, we will run multiple copies of the second algorithm initialized with a guess for $|M^*|$ based on what the value of $|M_1|$ is at that point. Thus, for any fixed $\delta > 0$, we can guess the value of $|M^*|$
up to a factor of $(1+\delta)$ by running the algorithm in parallel for all  powers $i$ of $1 + \delta$, that satisfy $|M_1|/(1+\delta) \leq (1+ \delta)^i \leq 4|M_1|/(1 - 2\eps) $. Notice that some copies of our algorithm may stop after some time as their $|M^*|$ estimate is no longer valid, others will continue and new ones with $|M^*|$ estimates that are already not running will start initialized with the matching $M_1$ at that point in the stream.
This increases algorithm's space only by a factor of  $O(\log \frac{4(1+\delta)}{1-2\eps})$
and deteriorates the solution value by at most $O(\delta |M^*|)$.

\subsection{Proof of Theorem \ref{thm:matching}}\label{app:match}

In this section, we prove that \textsc{Match} has an approximation ratio of at least $1/2+\gamma$ in expectation for some fixed constant $\gamma$ >0.

First we prove a lemma on the robustness
of the greedy algorithm
to small changes in the stream which is required to deal with correlations that may arise in the case where we need to show that the greedy algorithm picks a significant fraction of edges of $|M^*|$. For more on this, we refer the reader to Section \ref{sec:mat-overview}. 

\begin{lemma}\label{robust-greedy}
Let $\sigma$ and $\sigma'$ be streams of edges
in $G$ such that $\sigma$ can be transformed into $\sigma'$ by deleting an edge from $\sigma$. Let $M$ and $M'$ be the matchings computed
by the greedy algorithm on $\sigma$ and $\sigma'$ respectively. Then $| |M|-|M'| | \leq 1$.
\end{lemma}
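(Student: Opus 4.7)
The plan is to couple the two greedy executions step by step and show that the symmetric difference $M \triangle M'$ evolves as a single connected subgraph (a path or even cycle), from which $\bigl| |M| - |M'| \bigr| \le 1$ follows. Let $e_i$ be the deleted edge and let $M_j, M'_j$ denote the greedy matchings after processing the first $j$ edges of $\sigma$ and $\sigma'$, respectively. For $j < i$ the prefixes agree, so $M_j = M'_j$. If $e_i$ is skipped by greedy on $\sigma$ (because one of its endpoints is already matched in $M_{i-1}$), then $M_i = M_{i-1} = M'_{i-1}$ and from that point on the two executions process the same edges from the same state, yielding $M = M'$. Hence we may assume $e_i \in M$, in which case $M_i = M_{i-1} + e_i$ while $M'_{i-1} = M_{i-1}$.

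For $j \ge i$, set $D_j := M_j \triangle M'_{j-1}$, so $D_i = \{e_i\}$. The key step is to prove by induction on $j$ the invariant: $D_j$ is either empty, a single alternating path, or a single alternating even cycle, and contains $e_i$ as an $M \setminus M'$ edge whenever nonempty. For the inductive step at $j > i$ both greedys process the same edge $e_j = (x,y)$; the only vertices whose matching status differs between $M_{j-1}$ and $M'_{j-2}$ are those in $V(D_{j-1})$, where interior vertices of the path/cycle are matched in both matchings and path endpoints are matched in exactly one of them. If $D_{j-1}$ is a cycle, or if neither $x$ nor $y$ lies in $V(D_{j-1})$, then the two greedys act identically on $e_j$ and $D_j = D_{j-1}$. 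Otherwise a short case analysis shows that the diverging edge $e_j$ attaches to $D_{j-1}$ at an endpoint and either lengthens the path by one edge (when its other endpoint lies outside $V(D_{j-1})$) or closes it into an even alternating cycle (when both endpoints of $e_j$ are path endpoints of $D_{j-1}$ with compatible matching statuses); in either case $D_j$ remains a single component. Applying the invariant at $j = n$ shows that $M \triangle M'$ is either empty, a single alternating path, or a single alternating even cycle, and since in a cycle or even-length path $M$ and $M'$ contribute equally many edges while in an odd-length path they differ by one, we conclude $\bigl| |M| - |M'| \bigr| \le 1$.

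The main obstacle is the inductive step: one must carefully classify each vertex in $V(D_{j-1})$ by its matching status in the two greedys and verify, case by case, that whenever the greedys diverge on $e_j$ the single-component structure of $D$ is preserved rather than fragmented into two components. A useful feature is that the cycle case acts as an absorbing state: once $D$ becomes a cycle, every vertex of $D$ is matched in both matchings, so no later edge can trigger any further divergence, which shortens the remaining verifications.
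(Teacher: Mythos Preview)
Your proof is correct, but it takes a genuinely different route from the paper's.

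The paper argues directly about the final symmetric difference $C = M \triangle M'$: for any connected component of $C$ that does \emph{not} contain the deleted edge, let $e$ be the first of its edges to arrive (well-defined and identical in both streams). When $e$ arrives, any edge already matched at an endpoint of $e$ in one run but not the other would lie in $C$, share a vertex with $e$, and hence belong to the same component---contradicting the minimality of $e$. So both greedy runs treat $e$ identically, contradicting $e \in C$. Thus $C$ has at most one path, and $\bigl||M|-|M'|\bigr|\le 1$ follows immediately.

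Your approach instead performs a forward, step-by-step coupling of the two executions, maintaining the invariant that the running symmetric difference $D_j$ is a single alternating path or even cycle. The case analysis you outline does go through: the only vertices where the two matchings disagree are the (at most two) endpoints of the current path, so a diverging edge must attach at an endpoint and either extends the path or closes it into an even cycle; the cycle state is absorbing. This yields the same structural conclusion about $M\triangle M'$.

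What each approach buys: your coupling gives a concrete, dynamic picture of how the two runs diverge and makes the ``at most one component'' structure explicit throughout the execution; the absorbing-cycle remark is a pleasant bonus. The paper's first-arriving-edge contradiction, on the other hand, is considerably shorter---it avoids induction and the edge-by-edge case split entirely, handling all components of $C$ with a single two-line argument.
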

\begin{proof}
Let $C = (M \setminus  M')\cup (M'\setminus M)$, that is, the symmetric difference of
$M$ and $M'$. Notice that $C$ is a collection of disjoint paths and cycles
that alternate between edges of $M$ and $M'$.
We claim that in this collection there is at most one path of non-zero length.
This implies the statement in the lemma.

We argue that if a path exists, it must contain the edge that was deleted. Hence, it
is the unique path.
Assume toward contradiction, that there exists a path in $C$, which does not contain the deleted edge.
We now closely examine the first edge $e$ of this path that arrives.
Note that this is the same for $\sigma$ and $\sigma'$.
In both runs of the greedy algorithm, the two vertices of $e$ are not incident to a matching edge
when $e$ arrives. This means that $e$ should have been taken in both runs and therefore cannot be in the
symmetric difference of the matchings, a contradiction.

\end{proof}
In our analysis we will use the following
lemma by Konrad et al which bounds the number of edges that cannot be augmented by 3-augmenting paths if size of maximal matching is small.

\begin{lemma}[Lemma~1 in~\cite{konrad2012maximum}]\label{non-3-aug} 

Let  $ \alpha > 0$, $M$
be a maximal matching in $G$ and $M^*$ be a
maximum matching in $G$ with $|M| \leq (1/2+\alpha)|M^*|$.
Then the number of 3-augmentable edges in $M$
is at least $(1/2 - 3 \alpha)|M^*|$.
In particular, the number of non-3-augmentable edges in $M$
is at most $4\alpha|M^*|$.
\end{lemma}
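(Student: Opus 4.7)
The plan is to analyze the symmetric difference between $M$ and $M^*$ by counting endpoints. Since $M$ is maximal, no edge of $M^*$ can have both endpoints free (unmatched by $M$): otherwise that edge could be added to $M$, contradicting maximality. So every $M^*$-edge has at least one endpoint in $V(M)$, the set of vertices matched by $M$.

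First I would count $M^*$-endpoints lying in $V(M)$. Let $a$ be the number of $M^*$-edges with both endpoints in $V(M)$ and $b$ the number with exactly one endpoint in $V(M)$. Since $a+b = |M^*|$ (using maximality to exclude edges with zero matched endpoints) and every vertex of $V(M)$ accounts for at most one $M^*$-endpoint, we get $2a+b \le |V(M)| = 2|M| \le (1+2\alpha)|M^*|$. Subtracting $a+b = |M^*|$ yields $a \le 2\alpha|M^*|$, hence
\[
b \ge (1-2\alpha)|M^*|.
\]

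Next, for each of the $b$ edges $f = \{x,y\} \in M^*$ with, say, $x$ matched and $y$ free, assign to $f$ the unique edge $g(f) \in M$ incident to $x$. Note $g(f) \ne f$ because $y$ is free in $M$. Each $e \in M$ receives at most two preimages under $g$ (one for each endpoint). The key observation is that if $e = \{u,v\} \in M$ receives exactly two preimages $f_1 = \{u,u'\}, f_2 = \{v,v'\} \in M^*$ with $u',v'$ free in $M$, then $u' \ne v'$ (otherwise the distinct $M^*$-edges $f_1,f_2$ would share endpoint $u'$, contradicting that $M^*$ is a matching), so $u'\!-\!u\!-\!v\!-\!v'$ is a genuine 3-augmenting path and $e$ is 3-augmentable.

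Finally, a simple pigeonhole finishes the job. Let $p$ be the number of $M$-edges receiving two preimages under $g$; the remaining $|M|-p$ edges receive at most one each, so $b \le 2p + (|M|-p) = p + |M|$, giving
\[
p \ge b - |M| \ge (1-2\alpha)|M^*| - (1/2+\alpha)|M^*| = (1/2-3\alpha)|M^*|.
\]
Since every such $e$ is 3-augmentable, this bounds the number of 3-augmentable edges from below, and the ``in particular'' statement follows by subtracting from $|M| \le (1/2+\alpha)|M^*|$. The only subtle point — and essentially the sole obstacle — is verifying that the endpoints $u',v'$ arising from the two preimages are distinct, which I expect to handle in a single line using the matching property of $M^*$; everything else is double counting.
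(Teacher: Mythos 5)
The paper does not prove this lemma; it cites it as Lemma~1 of Konrad, Magniez, and Mathieu~\cite{konrad2012maximum}, so there is no in-document proof to compare against. Your argument is nevertheless correct and complete. Maximality of $M$ rules out $M^*$-edges with zero matched endpoints, so $a+b=|M^*|$; the vertex count $2a+b\le 2|M|\le(1+2\alpha)|M^*|$ gives $b\ge(1-2\alpha)|M^*|$; the map $g$ is well-defined (the $M$-edge at the matched endpoint cannot equal $f$ since the other endpoint of $f$ is free) and each $M$-edge has at most two preimages because $M^*$ is a matching; and for a doubly-hit $e=\{u,v\}$ with preimages $\{u,u'\},\{v,v'\}$, the vertices $u',v'$ are distinct free vertices of degree one in $M\cup M^*$, so $u'\text{--}u\text{--}v\text{--}v'$ is a maximal alternating path of length $3$, i.e.\ a genuine $3$-augmenting path. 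The pigeonhole $b\le p+|M|$ then gives $p\ge(1/2-3\alpha)|M^*|$, and the ``in particular'' clause follows by subtracting from $|M|\le(1/2+\alpha)|M^*|$. For comparison, the original proof in~\cite{konrad2012maximum} works with the path/cycle decomposition of $M\oplus M^*$ and observes that $3$-augmenting paths are the only components with $m^*(C)>\tfrac32 m(C)$, which yields the same bound $k_3\ge 2|M^*|-3|M|$; your endpoint-counting plus pigeonhole route reaches the identical inequality while sidestepping the component-type case analysis, so it is a bit more self-contained, at the cost of re-deriving the path structure locally rather than invoking the decomposition.
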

We are now ready to prove Theorem~\ref{thm:matching}.

\begin{proof}[Proof of Theorem~\ref{thm:matching}]
Define $\eps=1/50$, $\alpha = \eps$, and $\rho=\eps/4$. Without loss of generality, we assume that $|M^*| > 2/\rho$. Otherwise,
the algorithm can just store the whole graph as it is sparse i.e., it has linear number of edges.

Let $\sigma_1$ be the smallest prefix of the stream that contains
$k = \lceil \rho \cdot |M^*| \rceil$
elements of $M^*$. Further, let $\sigma_2$ be the remainder of the stream.
Notice, that $M^{(1)}_2 \subseteq M_1$, since the first algorithm starts the
same way, but continues even after reaching this threshold.
Let $M^{(\sigma_1,1)}_2 \subseteq M_2^{(1)}$ be a random variable corresponding to only
those edges in $M^{(1)}_2$ taken during $\sigma_1$.
\begin{description}
\item[Case 1: For all permutations of $\Eopt$ it holds that  $M^{(\sigma_1,1)}_2 = M^{(1)}_2$.] 
Notice that by definition, $|M^{(\sigma_1,1)}_2| = |M^{(1)}_2| \ge (1/2-\eps) |M^*|$.
The basic idea for this case is to show
that in $\sigma_2$ there are
a lot of 3-augmenting paths that can be used to improve $M^{(1)}_2$ via \textsc{3-Aug-Paths}.

If $\mathbb |M_1| \ge |M^*|(1/2+\alpha)$, we are already
significantly better than $1/2$.
Hence, assume otherwise.
From Lemma~\ref{non-3-aug} it follows that
the number of non-3-augmentable edges
in $M_1$ is at most $4\alpha |M^*|$.
The number of 3-augmentable edges in $M^{(\sigma_1,1)}_2$ is obviously $|M^{(\sigma_1,1)}_2|$ minus the number non-3-augmentable edges in it. The former is at least $|M^*|(1/2-\eps)$
while the latter is a subset of the
non-3-augmentable edges in $M_1$ and
hence at most $4\alpha |M^*|$.
It follows that the number of 3-augmentable edges in $M^{(\sigma_1,1)}_2$ is
at least $(1/2-4\alpha - \eps)|M^*|$.

We will now restrict our attention to the subgraph of the edges in $M^{(\sigma_1,1)}_2$ and
$\sigma_2$ and the 3-augmentable edges there.
Recall, every 3-augmentable edge corresponds to a 3-augmenting path
that has two edges from $M^*$ and one from $M^{(\sigma_1,1)}_2$.
If at least one of the edges from $M^*$ appears
in $\sigma_1$, this edge is no longer 3-augmentable when we restrict ourselves
to $\sigma_2$.
However, by definition only $k$ of the edges from $M^*$ appear
in $\sigma_1$ and each of them can appear in only one 3-augmenting path.
In consequence, the number of 3-augmentable edges in $M^{(\sigma_1,1)}_2$ considering
only $\sigma_2$ is at least
\begin{equation*}
    \left(\frac 1 2 - 4\alpha - \eps\right)|M^*| - k \ge \left(\frac 1 2 - 4\alpha - \eps - \rho - \frac{1}{|M^*|}\right)|M^*| \ge \left(\frac 1 2 - 4\alpha - \eps - \frac{3\rho}{2} \right)|M^{*}| .
\end{equation*}
Here we use that by assumption $|M^*| > 2/\rho$.
After constructing the matching $M^{(1)}_2$, the second algorithm proceeds
to collect 3-augmenting paths for it using \textsc{3-Aug-Paths}.
We fix its parameter
\begin{equation*}
    \beta := \left(\frac 1 2 - 4\alpha-\eps- \frac{3\rho}{2} \right) / \left(\frac 1 2-\eps\right) .
\end{equation*}
Notice that after completing Phase~1, the second algorithm
has at least $\beta |M^{(1)}_2|$ many 3-augmenting paths in the remaining instance.
Hence, by Lemma~\ref{find-3-aug} we are guaranteed to find $(\beta^2/32)|M^{(1)}_2|$ many 3-augmenting paths. We conclude that
\begin{equation*}
  \mathbb |M_2| \ge \left(1 + \frac{\beta^2}{32}\right) |M^{(1)}_2| \ge \left(1 + \frac{\beta^2}{32}\right)(1 - \eps) |M^*| .
\end{equation*}
Using the definitions of the constants, we calculate that
$\beta = (4-43\eps)/(4-8\eps)$.
\item[Case 2: For at least one permutation of $\Eopt$ it holds that $M^{(\sigma_1,1)}_2 \subsetneq M^{(1)}_2$.]

Let $\sigma^*_1$ be the realization of the random variable $\sigma_1$ 
for such a permutation, i.e.,
we have
$|M^{(\sigma^*_1,1)}_2|  \leq (1/2-\eps) |M^*|$.
We will argue that in expectation (not just for $\sigma^*_1$) the greedy algorithm will select a
considerable number of elements from $M^*$.
This directly improves its guarantee:
Let $S = M^* \cap M_1$. Every
edge in $M^*\setminus S$ intersects with some edge in $M_1\setminus S$, but every
edge in $M_1\setminus S$ can only intersect with two edges in $M^*\setminus S$. This implies
$2|M_1\setminus S|\ge|M^*\setminus S|$ and consequently 
\begin{align}
   |M_1| \ge (|M^*|+|S|)/2.  \label{eq-match:eq1}
\end{align}

We bound $\mathbb E[|S|]$ from below by examining the elements of $M^*$ in $\sigma_1$, denoted by $e^*_1 , e^*_2, \dotsc, e^*_k$.
Let $\sigma^{(1)},\dotsc,\sigma^{(k)}$ correspond to
the prefix of $\sigma$ until right before $e^*_1,\dotsc,e^*_k$ arrive.
Further, define $M_1^{(1)},\dotsc,M_1^{(k)}$ as the value
of $M_1$ after each prefix $\sigma^{(1)},\dotsc,\sigma^{(k)}$.

Notice that $\sigma^{(k)}$ can be transformed
to $\sigma^*_1$
by adding and deleting at most $2k$ elements.
Thus, it follows from Lemma~\ref{robust-greedy} that for all $i\le k$ and any $\sigma_1$,
\begin{equation*}
    |M^{(i)}_1| \le |M^{(k)}_1| \le (1/2-\eps)|M^*| + 2k .
\end{equation*}
This implies that the number of edges in $M^*$ not intersecting
with edges in $M^{(i)}_1$ is at least $|M^*| - 2|M^{(i)}_1| \ge 2\eps |M^*| - 4k$.
If $e^*_i$ is one of these edges, then it is taken in $M_1$.
The probability for each element in $M^*$, which has not arrived yet,
to be $e^*_i$ is equal.
Hence, conditioning on some choice of $\sigma_1^{(i)}$ the probability
that $e^*_i$ is taken in $M_1$ is at least
$(2\eps |M^*| - 4k) / (|M^*| - (i-1)) \ge 2\eps - 4k/|M^*|$. 
Thus,
\begin{multline*}
\E[S] \ge \sum_{i = 1}^k \mathbb P[e^*_i \in M_1]
     \ge k \left(2\eps- \frac{4 k}{|M^*|}\right) \\
     \ge \rho |M^*| \left(2\eps- 4\frac{\rho |M^*| + 1}{|M^*|} \right)
     = \left(2\eps\rho - 4\rho^2 - \frac{4\rho}{|M^*|}\right) |M^*| .
\end{multline*}
With assumption $M^* > 2/\rho$, (\ref{eq-match:eq1}) we conclude that
\begin{equation*}
    \E[|M_1|] \geq \left(\frac 1 2+2 \eps\rho - 6\rho^2\right)|M^*| = \left(\frac{1}{2} + \frac{\eps^2}{8}\right) |M^*| .
\end{equation*}
\end{description}
Taking the worst of the bounds, we calculate the constant
\begin{equation*}
    \gamma = \min\left\{\eps, \frac{1}{32}\left(\frac{4 - 43\eps}{4 - 8\eps}\right)^2 (1 - \eps) -  \eps,
    \frac{\eps^2}{8}\right\} = \frac{1}{20000} . \qedhere
\end{equation*}
\end{proof}

\section{Omitted proofs for submodular function maximization}
\label{app:one}

We start by briefly recalling the algorithm. We refer the reader to Section~\ref{alg:submod} for the complete details. 

The algorithm produces a tree of height at most $k$ (cardinality constraint) and each root to node (at height $i$) path corresponds to a solution (of $i$ elements). Each node has at most $|I|$( set of all possible increments) children.  At the beginning of the stream, the root of the tree stores the empty set. For each element $e$ in the stream, we add it as a child of any node $T$ at height less than $k$, if for no existing child $c$ of $T$ it holds that 
$f(c \mid S )$=$f(e \mid S )$ where $S$ denotes the solution corresponding to the path from root to $T$. At the end of the stream, the algorithm produces the best solution among all leaves. 

We will choose $I$ to be a set of size $O(k)$. For this however, we first need to know $\OPT$. We show below how to remove this assumption with a small increase in space.

\subsection{Assumption that $\OPT$ is known} \label{app:two}
The algorithm presented in Section~\ref{alg:submod}
uses the assumption that it knows the value $\OPT$. We will describe
in the following how to remove this assumption.
We use the same trick as described in~\cite{badanidiyuru2014streaming}.
Let $\delta > 0$ be a small constant. 
It is easy to see that when using some value $g \le \OPT$
instead of $\OPT$, the algorithm produces
a solution of value at least $0.5506 g$.
We will run the algorithm in parallel for
multiple guesses of $g$.
If $g\le \OPT \le (1 + \delta)g$ for some guess,
we would only loose a factor of $(1+\delta)$
in the approximation ratio.
However, the range in which $\OPT$ lies cannot be bounded.
Hence, we must adapt our guesses as we read the stream.
To that end, we keep track of the maximum element in the stream at any point of time.
Let $m_i$ denote the maximum element after observing the first $i$ elements of the stream $\sigma$, i.e., $m_i = \max_{j \le i} f( \{ \sigma_j \})$.
It is easy to see that any subset of $\sigma_1,\dotsc,\sigma_i$ with cardinality at most $k$
has a value between $m_i$ and $k \cdot m_i$.
Let
\begin{equation*}
    G_i = \left\{ (1+\delta)^{j}: j \in \mathbb{Z}, \frac{1}{1+\delta} m_i \leq (1+\delta)^{j} \leq \frac{k}{\delta} m_i \right\} .
\end{equation*}
At any point $i$ in the stream, we will run
our algorithm in parallel for all guesses in $G_i$.
When a new maximum element arrives,
this may remove previously existing
guesses, in which case we stop the execution for this guess
and dismiss its result.
On the other hand, new guesses may be added and
we start a new execution pretending the stream begins at
$\sigma_i$.

Let us consider $g$, the correct guess for $\OPT$, i.e.,
$g \le \OPT \le (1 + \delta) g$. Once $g$ is added to $G_i$,
it remains in the set of guesses until the end of execution:
If it was removed, this would mean it exists an element
of value greater than $(1 + \delta) g \ge \OPT$. However, no set smaller
than $k$ can have a value larger than $\OPT$.
It remains to check that the error induced by starting
the algorithm late is not significant.
Let $O$ denote the elements from the optimal solution
and $O'\subseteq O$ those that arrived before execution
for $g$ was started.
All elements in $O'$ were smaller than $g \delta / k$.
Hence, $f(O') \le k\cdot g\delta/k = g\delta$.
This implies
\begin{equation*}
    f(O\setminus O') \ge f(O) - f(O') \ge \OPT - \delta g
    \ge (1 - \delta) \OPT .
\end{equation*}
Hence, the approximation ratio decreases by a factor of at most $(1 - \delta)/(1 + \delta)$
and the space increases by a factor of
$\log_{1+\delta}(k (1 + \delta) / \delta) = O(1/\delta \log(k/\delta))$.

\subsection{Bounding the number of increases}
\label{app:three}

Recall, the tree algorithm stores roughly $|I|^k$
elements. Here $I = \{f(e \mid S) \mid S\subseteq E, |S| < k, e\in E\}$
is the set of all possible increases of $f$.
In order to achieve a reasonable memory bound, we need
to make sure that $|I|$ is small.
In the following we describe a way 
that bounds $|I|$ by $O(k)$ and only decreases the
approximation ratio marginally.

We assume that $\OPT$ is known (see previous section). Let $\delta > 0$ be a small constant. We divide the range $0$ to $\OPT$ into $k/\delta$ buckets each of size $\delta \cdot \OPT/k$. The idea is that $I$ now represents the set of possible range of increases of $f$ where each bucket corresponds to a range.  We now argue that this discretization does not affect the approximation ratio much. Recall that the recursion $R(k,h)$ was defined as follows:
\begin{equation*}
    R(k,h) = \min \left( \frac{t}{k} + \left(1 - \frac{t}{k} \right) R(k,h-1) ,
    \frac{1}{k} + \left(1 - \frac{1+t}{k} \right) R(k-1,h-1),
    \frac{1}{1+t} \right).
\end{equation*}
In Section~\ref{sec:submod-analysis}, $R(k,k) \cdot \OPT$ was proven to be a lower bound on the expected value of the solution returned by the algorithm. Due to bucketing, the element that is picked now might differ in value by at most  $(\OPT \cdot \delta/k)$ from the value promised by the analysis. As the recursion $R(k,k)$ has depth $k$ (the solution $S_k$ consists of $k$ elements), it is not hard to see that one can lower bound the expected value of the returned solution after bucketing by $R(k,k) \cdot \OPT - k \cdot O(\delta/k) \OPT$. Thus the loss incurred by discretization can be made arbitrarily small by appropriately selecting $\delta>0$.

\subsection{Analysis of recursion function}
In order to prove that our algorithm is $0.55$ competitive, we will lower bound the value of the recursion $R(k,h)$ where $R(k,h)$ was defined as the lower bound on the approximation ratio of the solution (we defined in Section~\ref{sec:submod-analysis}) at height $i$ as compared to $\OPT$ over all submodular functions, streams and opt elements. The heart of the proof lies in the fact that for a certain threshold $t$ and large $k$, the complicated recursion which involves taking the minimum of three terms simplifies to a recursion (that one can solve easily) involving just the first term. We prove this in Lemma \ref{lem:lemma2}. For complete details on the recursion definition and the solution we analyze, we refer the reader to Section \ref{sec:submod-analysis}. 

We first state some technical claims which would be helpful in proving Lemma \ref{lem:lemma2}.

\begin{claim} \label{claim:cl1}
The following inequality is satisfied for all $x \in [-0.1,0]$:
$$ e^x - \frac{x^2}{2} \leq 1 + x \leq e^x - \frac{x^2}{2} - \frac{x^3}{6}. $$
\end{claim}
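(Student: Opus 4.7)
The plan is to prove both inequalities by elementary calculus on two auxiliary functions,
\[
g(x) := (1+x) - \bigl(e^x - x^2/2\bigr), \qquad h(x) := \bigl(e^x - x^2/2 - x^3/6\bigr) - (1+x),
\]
so that the desired statement is exactly $g(x) \ge 0$ and $h(x) \ge 0$ on $[-0.1,0]$. In fact, both inequalities hold on all of $(-\infty,0]$, and the particular constant $-0.1$ plays no role in the argument; it is presumably fixed only because this is the range in which the claim is applied later in the analysis of $R(k,k)$.

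For $g$, the key observations are $g(0)=0$, $g'(x)=1-e^x+x$ (hence $g'(0)=0$), and $g''(x)=1-e^x\ge 0$ for all $x\le 0$. So $g'$ is non-decreasing on $(-\infty,0]$ and must therefore be $\le 0$ there, which in turn makes $g$ non-increasing on $(-\infty,0]$. Combined with $g(0)=0$ this yields $g\ge 0$ throughout, which is the left inequality.

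For the right inequality I would apply the same pattern one derivative deeper: a direct computation gives $h(0)=h'(0)=h''(0)=0$ and $h'''(x)=e^x-1\le 0$ for $x\le 0$, so propagating the sign of $h'''$ upward through three integrations (equivalently, three applications of the mean value theorem) forces $h\ge 0$ on $(-\infty,0]$. A second, equally short route is to read both bounds off the Taylor series $e^x=\sum_{n\ge 0} x^n/n!$ directly, by pairing consecutive tail terms as $\tfrac{x^{N+2k}}{(N+2k)!}\bigl(1+\tfrac{x}{N+2k+1}\bigr)$; for $|x|<1$ the parenthesised factor is positive, so each pair inherits the sign of $x^{N+2k}$, which is $\le 0$ when $N=3$ and $\ge 0$ when $N=4$.

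The proof is entirely routine and I do not foresee any real obstacle. The only minor bookkeeping point is keeping signs straight when transferring a sign condition from $g''$ (respectively $h'''$) back down to $g$ (respectively $h$) through the successive mean value arguments; once $g(0)$, $g'(0)$, $h(0)$, $h'(0)$, $h''(0)$ are all observed to vanish, the sign of the final derivative propagates cleanly with no need to invoke the hypothesis $x \ge -0.1$.
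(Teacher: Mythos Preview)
Your proof is correct. Both the derivative-propagation argument for $g$ and $h$ and the alternative Taylor-pairing argument are valid, and in fact your primary argument establishes the inequalities on all of $(-\infty,0]$, which is stronger than what the claim states.

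The paper takes essentially the Taylor-series route you sketch as your second option, but in a slightly less clean form: it writes $e^x=\sum_{i\ge 0}x^i/i!$ and then bounds the tails $\sum_{i\ge 3}x^i/i!$ and $\sum_{i\ge 4}x^i/i!$ by comparing them to geometric series in $|x|$, using $|x|\le 0.1$ to make the geometric-series bound kick in. Your derivative-based argument is arguably cleaner and does not need the hypothesis $x\ge -0.1$ at all; the paper's proof genuinely uses $|x|<1$ (and the specific bound $0.1$ buys comfortable slack in the geometric-series estimate). Your Taylor-pairing alternative is a tidier version of the same idea the paper uses, since pairing $\frac{x^{N+2k}}{(N+2k)!}+\frac{x^{N+2k+1}}{(N+2k+1)!}$ avoids the geometric-series detour entirely. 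Either approach is perfectly adequate here; the claim is routine and the paper's version simply reflects a quick-and-dirty tail bound rather than any essential use of the interval $[-0.1,0]$.
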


\begin{proof}
We first write down the taylor expansion of $e^x$:
$$e^x = \sum_{i=0}^{\infty} \frac{x^i}{i!}.$$ 
For $x \in [-0.1,0]$, 
$$\sum_{i=3}^{\infty} \frac{x^i}{i!} \leq -\frac{|x|^3}{6} + \frac{|x|^3}{24}  \cdot \sum_{i=1}^{\infty} |x|^i = -\frac{|x|^3}{6} + \frac{|x|^3}{24} \cdot \frac{1}{1-|x|} \leq 0.$$
Hence, $e^x \leq 1 + x + \frac{x^2}{2}.$ \\
Similarly, 
$$\sum_{i=4}^{\infty} \frac{x^i}{i!} \geq \frac{|x|^4}{24} - \frac{|x|^4}{120}  \cdot \sum_{i=1}^{\infty} |x|^i = \frac{|x|^4}{24} - \frac{x^4}{120} \cdot \frac{1}{1-|x|} \geq 0.$$
Hence, $e^x \geq 1 + x + \frac{x^2}{2} + \frac{x^3}{6}.$

\end{proof}

\begin{claim} \label{claim:cl2}
For $k \geq 999$ and $t\leq 1$, the following inequality is satisfied:
$$\frac{1}{2}\sum_{i=2}^{\infty} \left( \frac{t^2 \cdot e^{\frac{t}{k}} }{1.9 \cdot k} \right)^i \leq \frac{t^4}{3 \cdot k^2}.$$
\end{claim}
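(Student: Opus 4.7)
The plan is to recognize the left-hand side as a geometric tail and then reduce the claim to a purely numeric inequality.

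First I would set $r := t^2 e^{t/k}/(1.9\,k)$ so that the left-hand side equals $\tfrac{1}{2}\sum_{i=2}^{\infty} r^i$. The first task is to show $r<1$ so the geometric series converges: for $t\le 1$ and $k\ge 999$ we have $t/k\le 1/999$, hence $e^{t/k}\le e^{1/999}<1.002$, which gives $r\le 1.002/(1.9\cdot 999) < 5.3\cdot 10^{-4}$. In particular $r\le 1/2$, so $1-r\ge 1/2$.

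Using the standard closed form $\sum_{i=2}^{\infty} r^i = r^2/(1-r)$, the claim reduces to
\begin{equation*}
    \frac{r^2}{2(1-r)} \;\le\; \frac{t^4}{3k^2}.
\end{equation*}
Substituting $r^2 = t^4 e^{2t/k}/(3.61\,k^2)$ and cancelling $t^4/k^2$, this is equivalent to
\begin{equation*}
    3\,e^{2t/k} \;\le\; 7.22\,(1-r).
\end{equation*}

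For the final step I would plug in the bounds from the first paragraph. Since $t/k\le 1/999$ we get $e^{2t/k}\le e^{2/999}\le 1.003$, so the left-hand side is at most $3.009$. On the right-hand side, $1-r\ge 1-5.3\cdot 10^{-4}\ge 0.999$, so $7.22(1-r)\ge 7.21$. Hence $3.009\le 7.21$, which is the desired inequality. The main obstacle, if any, is simply making sure the constants $1.9$, $3.61$ and $7.22$ line up correctly and that the bounds on $r$ and $e^{2t/k}$ are tight enough; the margin ($\approx 3$ versus $\approx 7.2$) is comfortable enough that a very coarse estimate suffices and no clever manipulation is needed.
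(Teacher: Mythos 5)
Your proposal is correct and takes essentially the same route as the paper: sum the geometric series and then reduce to a numeric comparison using the bounds $k\ge 999$, $t\le 1$. One minor difference: the paper's displayed intermediate step actually drops the leading factor of $\tfrac12$ (writing $\tfrac12\sum_{i\ge 2} r^i = r^2/(1-r)$ rather than $\tfrac12\cdot r^2/(1-r)$), which your version keeps; since the bound is generous, the paper's inequality still holds, but your calculation is the cleaner one.
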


\begin{proof}
By using formula for sum of an infinite geometric series:
\begin{align*}
    &\frac{1}{2}\sum_{i=2}^{\infty} \left( \frac{t^2 \cdot e^{\frac{t}{k}} }{1.9 \cdot k} \right)^i = \frac{t^4 \cdot e^{\frac{2 \cdot t}{k}} }{1.9^2 \cdot k^2} \cdot \frac{1}{1 - \frac{t^2 \cdot e^{\frac{t}{k}} }{1.9 \cdot k}} \leq \frac{t^4}{3 \cdot k^2}.
\end{align*}
\end{proof}
Now we prove a closed form expression for $R(k,h)$ when
only the first rule is applied.
\begin{claim}\label{claim:exact-exp}
For any integer $k\geq 1000$ and any non-negative integer $h \leq k$ such that $R(k, h') = t/k + (1 - t/k) R(k, h'-1)$ for all $1 \le h' \le h$ it holds that
$$
 R(k,h) = 1 - \left( 1 - \frac{t}{k}\right)^h.
$$
\end{claim}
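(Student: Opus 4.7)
The plan is a straightforward induction on $h$, since the hypothesis of the claim already hands us the recurrence $R(k,h') = t/k + (1 - t/k) R(k, h'-1)$ for every $1 \le h' \le h$. Essentially nothing here is hard; the condition $k \ge 1000$ is not used in this particular step (it will matter elsewhere, namely to justify that the first term of the defining minimum is actually attained), so the argument reduces to unrolling a linear recurrence.

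First I would verify the base case $h=0$. By definition $R(k,0) = 0$, and $1 - (1 - t/k)^0 = 1 - 1 = 0$, so the formula holds. For the inductive step, assume $R(k, h-1) = 1 - (1 - t/k)^{h-1}$. Applying the recurrence (which holds by assumption for $h' = h$) gives
\begin{align*}
R(k,h) &= \frac{t}{k} + \left(1 - \frac{t}{k}\right) R(k, h-1) \\
&= \frac{t}{k} + \left(1 - \frac{t}{k}\right)\left[1 - \left(1 - \frac{t}{k}\right)^{h-1}\right] \\
&= \frac{t}{k} + \left(1 - \frac{t}{k}\right) - \left(1 - \frac{t}{k}\right)^{h} \\
&= 1 - \left(1 - \frac{t}{k}\right)^{h},
\end{align*}
which completes the induction.

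The only subtlety worth flagging is to be precise about what the assumption allows us to invoke: we need the recurrence at $h' = h$, and we need the formula at $h-1$. The inductive hypothesis at $h-1$ is fine because the assumption of the claim at $h$ implies the corresponding assumption at $h-1$ (the condition $R(k,h') = t/k + (1-t/k)R(k,h'-1)$ for all $1 \le h' \le h-1$ is just a subset of the hypothesis), so we may legitimately apply the induction. There is no real obstacle; the work in this part of the paper lies in the separate lemma showing that the first term of the minimum in the definition of $R(k,h)$ is indeed the attaining one for large $k$, which is what licenses using this closed form.
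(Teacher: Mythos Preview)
Your proof is correct and matches the paper's own argument essentially line for line: both proceed by induction on $h$, check the base case $h=0$ against $R(k,0)=0$, and then unroll the assumed recurrence to obtain $1-(1-t/k)^h$. Your remark that $k\ge 1000$ is not actually used here is accurate; as you say, that condition only matters in the separate lemma establishing that the first term of the minimum is attained.
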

In fact, Lemma~\ref{lem:lemma2} will show that
for large values of $k$ (and any $h$) the condition
and thereby this closed form holds.

\begin{proof}
We will prove the equality by induction on $h$ for any $k \geq 1000$.  \\
For $h=0$, the equality holds as $R(k,h)=0$.
Then by induction hypothesis
\begin{align*}
    R(k,h) &= \frac{t}{k} + \left( 1 - ( 1 - \frac{t}{k})^{h-1} \right) \cdot (1 - \frac{t}{k}) \\
    &= \frac{t}{k} + 1 - \frac{t}{k} -\left( 1 - \frac{t}{k}\right)^h \\
    &= 1 - \left( 1 - \frac{t}{k}\right)^h. 
\end{align*}
\end{proof}

\begin{lemma}\label{lem:lemma2} 
With $t = 0.8$ for every $k\ge 1000$ and $h \leq k$ it holds
that
\begin{equation*}
    R(k,h) = \frac{t}{k} + \left(1 - \frac{t}{k} \right) R(k,h-1) .
\end{equation*} 
\end{lemma}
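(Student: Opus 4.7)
The plan is to proceed by induction on $h$, showing simultaneously for all $k \geq 1000$ that the first term in the min defining $R(k, h)$ attains the minimum; by Claim~\ref{claim:exact-exp} this yields the closed form $R(k,h) = 1 - (1 - t/k)^h$. The base case $h = 1$ is direct: $R(k, 1) = \min\{t/k,\, 1/k,\, 1/(1+t)\} = t/k$, since $t = 0.8 < 1$ and $t/k \leq 8 \cdot 10^{-4} < 5/9$ for $k \geq 1000$.

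For the inductive step I would fix $k \geq 1000$ and assume the closed form for all $(k', h')$ with $h' < h$ and $k' \geq 999$; the threshold $999$ (rather than $1000$) accommodates the $R(k-1, h-1)$ term appearing below at $k = 1000$, and the numerical verifications go through identically at $k' = 999$. It then suffices to show the first term is no larger than each of the other two. For the first-vs-third comparison, the first term equals $1 - (1-t/k)^h$, so we need $(1-t/k)^h \geq t/(1+t) = 4/9$; since this is decreasing in $h$, the worst case is $h = k$, where the Taylor estimate $\ln(1-y) \geq -y - y^2$ for small $y$ yields $(1-t/k)^k \geq e^{-t - t^2/k} \geq 0.449 > 4/9$ for all $k \geq 1000$.

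The first-vs-second comparison is the technical core. Substituting the closed form from the inductive hypothesis on both sides, the required inequality rearranges to
\begin{equation*}
(1 - t/k)^h \;\geq\; t/k + \bigl(1 - (1+t)/k\bigr)\bigl(1 - t/(k-1)\bigr)^{h-1}.
\end{equation*}
Writing $(1-t/k)^h = (1-t/k)(1-t/k)^{h-1}$ and $(1-(1+t)/k) = (1-t/k) - 1/k$, this further reorganizes into the equivalent form
\begin{equation*}
(1-t/k)\bigl[(1-t/k)^{h-1} - (1-t/(k-1))^{h-1}\bigr] + \tfrac{1}{k}(1-t/(k-1))^{h-1} \;\geq\; \tfrac{t}{k}.
\end{equation*}
I would bound the bracketed difference from below using $(1-t/k) - (1-t/(k-1)) = t/(k(k-1))$ together with the mean value theorem, and then convert the remaining factors $(1-t/k)^a$ and $(1-t/(k-1))^a$ to exponentials via the two-sided Taylor bound of Claim~\ref{claim:cl1} and the geometric tail bound of Claim~\ref{claim:cl2}. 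In the asymptotic regime $k \to \infty$ with $h = k$, the inequality reduces cleanly to $(1+t)\,e^{-t} \geq t$; for $t = 0.8$ this reads $1.8 \geq 0.8\,e^{0.8} \approx 1.780$. The main obstacle is the tightness of this asymptotic limit---a margin of only about $0.02$---so the crux of the proof is verifying that the $O(1/k)$ correction terms from the Taylor expansions stay strictly smaller than this margin uniformly in $h \leq k$ for every $k \geq 1000$. The threshold $k \geq 1000$ in the lemma statement is chosen precisely so that Claims~\ref{claim:cl1} and~\ref{claim:cl2} deliver enough quantitative control to close the argument.
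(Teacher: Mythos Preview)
Your overall strategy matches the paper's: reduce the first-vs-second comparison to the asymptotic inequality $(1+t)e^{-t}\ge t$ (with margin $\approx 0.009$) plus $O(1/k)$ corrections controlled via Claims~\ref{claim:cl1} and~\ref{claim:cl2}; the first-vs-third comparison is handled the same way. Where your sketch uses the mean value theorem on $(1-t/k)^{h-1}-(1-t/(k-1))^{h-1}$, the paper instead shows that $E_1:=k(1-t/k)^h-(k-1)(1-t/(k-1))^h$ is monotone decreasing in $h$, reduces to $h=k$, and then expands both terms via Claim~\ref{claim:cl1}; this leads to the same endpoint.

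There is, however, a structural gap in your induction. You induct on $h$ alone and try to absorb the $R(k-1,h-1)$ term at $k=1000$ by extending the hypothesis down to $k'\ge 999$. But to \emph{maintain} the hypothesis at level $h$ you must then also prove the statement for $k'=999$, which in turn needs $R(998,h-1)$, and the threshold slides down indefinitely. The phrase ``the numerical verifications go through identically at $k'=999$'' does not fix this: if you mean the analytic argument works at $k'=999$, you still need $R(998,\cdot)$; if you mean a computer check, you have not said so. The paper resolves this by inducting on \emph{both} $k$ and $h$, anchoring the base case at $k=1000$ via a direct computer verification of the recurrence for all $h\le 1000$; then for $k>1000$ one only ever needs $R(k-1,h-1)$ with $k-1\ge 1000$, which is already in the hypothesis. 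You should adopt the same two-variable induction with a computer-verified anchor.
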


\begin{proof}
We will prove by induction on $k$ and $h$.\\
For $h=1$, $R(k,1) = \frac{t}{k}$. For $k=1000$, we have verified that the induction hypothesis holds by computer assisted calculation. Hence the base case holds. \\ Recall that $R(k,h)$ is defined as: 
$$  R(k,h) = \min \left( \frac{t}{k} + R(k,h-1) \cdot \left(1 - \frac{t}{k} \right) ,   \frac{1}{k} + R(k-1,h-1) \cdot \left(1 - \frac{1+t}{k} \right)  , \frac{1}{1+t} \right).
$$

By induction hypothesis and Claim \ref{claim:exact-exp}, we know that $R(k,h-1) = 1 - (1 -t/k)^{h-1}$. Hence for $k \geq 1000$ and $h \leq k$, we get:
\begin{align*}
    \frac{t}{k} + R(k,h-1) \cdot \left(1 - \frac{t}{k} \right)  &=\frac{t}{k} + (1 - (1 -\frac{t}{k})^{h-1})\cdot \left(1 - \frac{t}{k} \right) \\
    &= 1 - (1 -\frac{t}{k})^{h} \\
    &\leq 1 - (1 - \frac{t}{k})^{k}. \\
    \intertext{As $(1 - \frac{t}{k})^{k}$ is monotonically decreasing in $k$ and $t=0.8$, we get:}
    \frac{t}{k} + R(k,h-1) \cdot \left(1 - \frac{t}{k} \right)  &\leq 1 - (1 - \frac{t}{1000})^{1000}  \\
    &\leq 0.5509 \\ 
    &\leq \frac{1}{1+t}. 
\end{align*}
Hence it suffices to prove the below for $k \geq 1000$ and $h \leq k$: 
\begin{align}
\frac{t}{k} + R(k,h-1) \cdot \left(1 - \frac{t}{k} \right) &\leq  \frac{1}{k} + R(k-1,h-1) \cdot \left(1 - \frac{1+t}{k} \right). \nonumber 
\end{align}

Or, equivalently:
\begin{align}
t &\leq 1 + R(k-1,h-1) \cdot \left(k - 1 - t \right) - R(k,h-1) \cdot \left(k - t \right). \nonumber \\
\intertext{By induction hypothesis and Claim \ref{claim:exact-exp}, we know that $R(k,h-1) = 1 - (1 -\frac{t}{k})^{h-1}$ and $R(k-1,h-1) = 1 - (1 - \frac{t}{k-1})^{h-1}$.}
t&\leq 1 + (1 - (1 - \frac{t}{k-1})^{h-1})  \cdot \left(k - 1 - t \right) - ( 1 - (1 - \frac{t}{k})^{h-1} )   \cdot \left(k - t \right) \nonumber \\
&\leq 1 + \left( k-1-t - \frac{(k-1-t)^h}{(k-1)^{h-1}} \right) - \left( k -t - \frac{(k-t)^h}{(k)^{h-1}} \right) \nonumber\\
&\leq \frac{(k-t)^h}{(k)^{h-1}} - \frac{(k-1-t)^h}{(k-1)^{h-1}} \nonumber \\
&\leq \underbrace{ k \cdot (1 - \frac{t}{k})^h - (k-1) \cdot (1 - \frac{t}{k-1})^{h}}_{E_1}.  \label{eq:eq1} 
\end{align}

$E_1$ is monotonically decreasing in $h$ as shown below:

\begin{align*}
    \left( k \cdot (1 - \frac{t}{k})^h - (k-1) \cdot (1 - \frac{t}{k-1})^{h} \right) - \left( k \cdot (1 - \frac{t}{k})^{h+1} - (k-1) \cdot (1 - \frac{t}{k-1})^{h+1} \right) &\geq 0 \\
    k \cdot (1 - \frac{t}{k})^h \cdot (1 - (1 - \frac{t}{k})) - (k-1) \cdot (1 - \frac{t}{k-1})^{h} \cdot (1 - (1 - \frac{t}{k-1})) &\geq 0\\
    t \cdot (1 - \frac{t}{k})^h  - t \cdot (1 - \frac{t}{k-1})^{h} &\geq 0.
\end{align*}

Hence we can lower bound $E_1$ by assuming $h=k$. We lower bound $E_1$ below:  

\begin{align}
    E_1 &\geq k \cdot (1 - \frac{t}{k})^k - (k-1) \cdot (1 - \frac{t}{k-1})^{k}. \nonumber \\
    \intertext{By using Claim \ref{claim:cl1}, $k \geq 1000$ and $t \leq 1$, we get:}
    E_1 &\geq k \cdot (e^{\frac{-t}{k}} - \frac{t^2}{2\cdot k^2} ) ^ {k}  -
    (k-1) \cdot \left( \underbrace{e^{\frac{-t}{k-1}} - \frac{t^2}{2\cdot (k-1)^2} + \frac{t^3 }{6 \cdot (k-1)^3}}_{T_1} \right) ^ {k} \nonumber \\
    &\geq k \cdot e^{-t} \cdot \underbrace{\left( 1 - \frac{t^2 \cdot e^{\frac{t}{k}} }{2 \cdot k^2} \right)^{k} }_{T_2} -  (k-1)\cdot T_1 \cdot e^{-t} \cdot \underbrace{ \left(1 - \frac{t^2 \cdot e^{\frac{t}{k-1}}}{2\cdot (k-1)^2} + \frac{t^3 \cdot e^{\frac{t}{k-1}}}{6 \cdot (k-1)^3} \right)^{k-1} }_{T_3}.  \label{eq:eq2}
\end{align}

We lower bound the term $T_2$. 
\begin{align*}
T_2 &= \left( 1 - \frac{t^2 \cdot e^{\frac{t}{k}} }{2 \cdot k^2}  \right)^{k}\\
&\geq 1 - \frac{t^2 \cdot e^{\frac{t}{k}} }{2 \cdot k}  - \sum_{i=2}^{\infty} \frac{k^{i}}{2} \cdot \left( \frac{t^2 \cdot e^{\frac{t}{k}} }{2 \cdot k^2} \right)^i \text{ (By Binomial Expansion.) }\\
&= 1 - \frac{t^2 \cdot e^{\frac{t}{k}} }{2 \cdot k}  - \frac{1}{2}\sum_{i=2}^{\infty} \left( \frac{t^2 \cdot e^{\frac{t}{k}} }{2 \cdot k} \right)^i \\
&\geq 1 - \frac{t^2 \cdot e^{\frac{t}{k}} }{2 \cdot k} - \frac{t^4}{3 \cdot k^2}.  \text{ (By Claim \ref{claim:cl2}, $t \leq 1$ and $k \geq 1000.$) }
\end{align*}

We now simplify and lower bound $k \cdot e^{-t} \cdot T_2$:

\begin{align}
    k \cdot e^{-t} \cdot T_2  &\geq   k \cdot e^{-t} \cdot  (1 - \frac{t^2 \cdot e^{\frac{t}{k}} }{2 \cdot k} - \frac{t^4}{3 \cdot k^2}) \nonumber \\
    &= k \cdot e^{-t} - \frac{t^2 \cdot e^{\frac{t}{k}} \cdot e^{-t}}{2} - \frac{t^4 \cdot e^{-t}}{3 \cdot k}. \label{eq:eq3}
\end{align}

We now upper bound $T_3$. 
\begin{align*}
    T_3 &= \left(1 - \frac{t^2 \cdot e^{\frac{t}{k-1}}}{2\cdot (k-1)^2} + \frac{t^3 \cdot e^{\frac{t}{k-1}}}{6 \cdot (k-1)^3} \right) ^ {k-1}. \\
    \intertext{By using Binomial Expansion, we get:}
    T_3&\leq 1 - \frac{t^2 \cdot e^{\frac{t}{k-1}}}{2\cdot (k-1)} + \frac{t^3 \cdot e^{\frac{t}{k-1}}}{6 \cdot (k-1)^3} +  \sum_{i=2}^{\infty} \frac{(k-1)^{i}}{2} \cdot \left( \frac{t^2 \cdot e^{\frac{t}{k-1}}}{2\cdot (k-1)^2} + \frac{t^3 \cdot e^{\frac{t}{k-1}}}{6 \cdot (k-1)^3}\right)^i \\
    &= 1 - \frac{t^2 \cdot e^{\frac{t}{k-1}}}{2\cdot (k-1)} + \frac{t^3 \cdot e^{\frac{t}{k-1}}}{6 \cdot (k-1)^3} +  \frac{1}{2} \cdot \sum_{i=2}^{\infty} \left( \frac{t^2 \cdot e^{\frac{t}{k-1}}}{2\cdot (k-1)} + \frac{t^3 \cdot e^{\frac{t}{k-1}}}{6 \cdot (k-1)^2}\right)^i. \\
    \intertext{By using that $t\leq 1$ and $k \geq 1000$, we get:}
    T_3 &\leq 1 - \frac{t^2 \cdot e^{\frac{t}{k-1}}}{2\cdot (k-1)} + \frac{t^3}{3 \cdot (k-1)^2} + \frac{1}{2} \cdot \sum_{i=2}^{\infty} \left( \frac{t^2 \cdot e^{\frac{t}{k-1}}}{1.9 \cdot (k-1)} \right)^i\\
    &\leq 1 - \frac{t^2 \cdot e^{\frac{t}{k-1}}}{2\cdot (k-1)} + \frac{t^3}{3 \cdot (k-1)^2} + \frac{t^4}{3 \cdot (k-1)^2} \text{ (By Claim \ref{claim:cl2}, $t \leq 1$ and $k \geq 1000.$) } \\
    &\leq 1 - \frac{t^2 \cdot e^{\frac{t}{k-1}}}{2\cdot (k-1)} + \frac{2 \cdot t^3}{3 \cdot (k-1)^2}.
\end{align*} \\

We now upper bound $T_1$. By using Claim \ref{claim:cl1}, $k \geq 1000$ and $t\leq 1$, we get:
$$
T_1 \leq 1 - \frac{t}{k-1} + \frac{t^3}{6 \cdot (k-1)^3}.
$$

We now simplify and upper bound $(k-1)\cdot T_1 \cdot e^{-t} \cdot T_3$:
\begin{align}
    & (k-1)\cdot T_1 \cdot e^{-t} \cdot T_3 \nonumber \\
    &\leq (k-1) \cdot e^{-t} \cdot T_1 \cdot (1 - \frac{t^2 \cdot e^{\frac{t}{k-1}}}{2\cdot (k-1)} + \frac{2 \cdot t^3}{3 \cdot (k-1)^2}) \nonumber \\
    &= e^{-t} \cdot T_1 \cdot ( (k-1) - \frac{t^2 \cdot e^{\frac{t}{k-1}}}{2} + \frac{2 \cdot t^3}{3 \cdot (k-1)}) \nonumber \\
    &\leq e^{-t} \cdot (1 - \frac{t}{k-1} + \frac{t^3}{6 \cdot (k-1)^3}) \cdot ( (k-1) - \frac{t^2 \cdot e^{\frac{t}{k-1}}}{2} + \frac{2 \cdot t^3 }{3 \cdot (k-1)}) \nonumber \\
    &\leq e^{-t} \cdot\left( \left((k-1) - \frac{t^2 \cdot e^{\frac{t}{k-1}}}{2} + \frac{2 \cdot t^3}{3 \cdot (k-1)}\right) + \left(-t + \frac{t^3 \cdot e^{\frac{t}{k-1}}}{2 \cdot (k-1)}\right) + \left(\frac{t^2 \cdot e^\frac{t}{k-1}}{3 \cdot (k-1)^2}\right) \right) \nonumber \\
    &\leq e^{-t} \cdot\left( k-1 - t -\frac{t^2 \cdot e^{\frac{t}{k-1}}}{2} + \frac{7 \cdot t^3 \cdot e^{\frac{t}{k-1}}}{6 \cdot (k-1)}  + \frac{t^2 \cdot e^\frac{t}{k-1}}{3 \cdot (k-1)^2} \right). \label{eq:eq4}
\end{align}

By replacing (\ref{eq:eq3}) and (\ref{eq:eq4}) in (\ref{eq:eq2}), we get: 

\begin{align}
    E_1 &\geq k \cdot e^{-t} - \frac{t^2 \cdot e^{\frac{t}{k}} \cdot e^{-t}}{2} - \frac{t^4 \cdot e^{-t}}{3 \cdot k} - e^{-t} \cdot( k-1 - t -\frac{t^2 \cdot e^{\frac{t}{k-1}}}{2} + \frac{7 \cdot t^3 \cdot e^{\frac{t}{k-1}}}{6 \cdot (k-1)}  + \frac{t^2 \cdot e^\frac{t}{k-1}}{3 \cdot (k-1)^2})   \nonumber \\
    &= e^{-t}\cdot(1+t) + e^{-t}\cdot( \frac{t^2 \cdot e^{\frac{t}{k-1}}}{2}  -\frac{t^2 \cdot e^{\frac{t}{k}}}{2} ) -e^{-t}\cdot(\frac{t^4}{3 \cdot k} + \frac{7 \cdot t^3 \cdot e^{\frac{t}{k-1}}}{6 \cdot (k-1)}) -e^{-t}\cdot \frac{t^2 \cdot e^\frac{t}{k-1}}{3 \cdot (k-1)^2} \nonumber \\
    &\geq e^{-t}\cdot(1+t)  -e^{-t}\cdot\frac{2 \cdot t^3 \cdot e^{\frac{t}{k-1}}}{(k-1)} -e^{-t}\cdot \frac{t^2 \cdot e^\frac{t}{k-1} }{3 \cdot (k-1)^2}. \label{eq:eq5}
\end{align}

By using (\ref{eq:eq1}) and (\ref{eq:eq5}), we get that it suffices to prove the following:

\begin{align*}
    t \leq e^{-t}\cdot(1+t)  -e^{-t}\cdot\frac{2 \cdot t^3 \cdot e^{\frac{t}{k-1}}}{(k-1)} -e^{-t}\cdot \frac{t^2 \cdot e^\frac{t}{k-1} }{3 \cdot (k-1)^2}.
\end{align*}

Here, $t=0.8$ satisfies the final inequality. 
\end{proof}

\begin{customthm}{8}
For all positive integers k, $R(k,k)\geq 0.5506.$
\end{customthm}

\begin{proof}
For $k \leq 1000$, $R(k,k) \geq 0.5506$ can easily be verified with computer assistance, since
it involves only a dynamic program of size $1000\times 1000$.
For $k > 1000$,  by Lemma \ref{lem:lemma2} and Claim \ref{claim:exact-exp}, $R(k,k) \geq 1- (1 - \frac{0.8}{k})^k \geq 1- e^{-0.8} \geq 0.5506$.
\end{proof}


\end{document}